   %Possible keys: aap,aop,aos
\documentclass[aap]{imsart}

%% Packages
\RequirePackage{amsthm,amsmath,amsfonts,amssymb}
\RequirePackage[authoryear]{natbib}%% uncomment this for author-year citations
\RequirePackage[colorlinks,citecolor=blue,urlcolor=blue]{hyperref}
\RequirePackage{graphicx}
\usepackage{enumerate}
\usepackage{dsfont}
\usepackage{color}

\startlocaldefs
\theoremstyle{plain}
\newtheorem{thm}{Theorem}[section]
\newtheorem{open}[thm]{Open problem} 

\newtheorem{prop}[thm]{Proposition}
\newtheorem{lemma}[thm]{Lemma}
\newtheorem{cor}[thm]{Corollary}
\theoremstyle{remark}
\newtheorem{example}[thm]{Example}

\newtheorem{remark}[thm]{Remark}
%
%\usepackage[pdflatex]{hyperref}
%
%%%%%%%%% moje

\def\stany{\mathcal{X}}

\def\1c{\mathbb{I}_C(x)}
%%%%%%%%% Wojtka

\def\n0{n_{0}}

%------------------------------------------------------------------------------
%  STANDARD SYMBOLS
%------------------------------------------------------------------------------

\def\Pr{\mathbb{P}}

\def\liM_d{{\lim_{n\to\infty}}}

%------------------------------------------------------------------------------
% PROBABILITY DISTRIBUTIONS
%------------------------------------------------------------------------------

\def\PM{{\mathsf{P}}}
\def\PiM{{\mathsf{\Pi}}}
\def\dist{{\mathrm{dist\,}}}
\def\Im{{\mathrm{Im\,}}}
\def\vr{{\varrho}}

%------------------------------------------------------------------------------
% ABBREVIATIONS
%------------------------------------------------------------------------------

\endlocaldefs
\begin{document}
\renewcommand{\arraystretch}{1.5}
\begin{frontmatter}

\title{Solidarity of Gibbs Samplers: the~spectral gap}%\protect\thanksref{T1}}
\runtitle{Solidarity of Gibbs Samplers}
%\thankstext{T1}{The research of IC is supported by NCN grant 2019/34/E/ST1/00120, the research of BM is supported by NCN grant 2018/31/B/ST1/00253. K{\L} has been supported by the Royal Society through the Royal Society University Research Fellowship.}

\begin{aug}
\author[A]{ \fnms{Iwona} \snm{Chlebicka} %\thanksref{t1,t2}
\ead[label=e3]{i.chlebicka@mimuw.edu.pl}}
\and
\author[B]{ \fnms{Krzysztof} \snm{{\L}atuszy\'{n}ski} %\thanksref{t1,t2}
\ead[label=e1]{K.G.Latuszynski@warwick.ac.uk}}\\
\and
\author[A]{ \fnms{B{\l}a\.zej} \snm{Miasojedow}%\thanksref{t3}
\ead[label=e2]{b.miasojedow@mimuw.edu.pl}} 
%\thankstext{t1}{Some comment}
%\thankstext{t2}{First supporter of the project}
%\thankstext{t3}{Second supporter of the project}
\runauthor{I. Chlebicka, K. {\L}atuszy\'nski, B. Miasojedow}
%\affiliation{University of Warwick and University of Warsaw %and Nicolaus Copernicus University
%}

\address[A]{Institute of Applied Mathematics and Mechanics\\ University of Warsaw\\ 
Banacha 2, 02-097 Warszawa, Poland\\
\printead{e3,e2}}
\address[B]{ Department of Statistics\\
University of Warwick\\ CV4 7AL, Coventry, UK\\
\printead{e1}}
%\printead{u1}}

\end{aug}

\begin{abstract}
 Gibbs samplers are preeminent Markov chain Monte Carlo algorithms used in computational physics and statistical computing. Yet, their most fundamental properties, such as relations between convergence characteristics of their various versions, are not well understood.
 
 In this paper we prove the solidarity principle of the spectral gap for the Gibbs sampler: if any of the random scan or $d!$ deterministic scans has a~spectral gap then all of them have. Our methods rely on geometric interpretation of the Gibbs samplers as alternating projection algorithms and analysis of the rate of convergence in the von Neumann--Halperin method of cyclic alternating projections. As byproduct of our analysis, we also establish that deterministic scan Gibbs, despite being non-reversible, share many robustness properties with reversible chains, including exponential inequalities and Central Limit Theorems under the same conditions.

In addition, we provide a quantitative result: if the spectral gap of the random scan Gibbs sampler decays with dimension at {  a polynomial rate $\beta$, then the rate is no worse than $2\beta +2$ for any deterministic scan.}

\end{abstract}

\begin{keyword}[class=MSC]
\kwd[Primary ]{60J22} 
%\kwd{65C05}
\kwd[; secondary ]{62F15}\kwd{82B20}\kwd{47B38}
\end{keyword}
\begin{keyword}
\kwd{Markov chain Monte Carlo}\kwd{Gibbs Sampler} \kwd{Glauber dynamics} \kwd{spectral gap}
   \kwd{alternating projections}
   \kwd{torpid mixing}
   \kwd{rapid mixing}
   \kwd{Central Limit Theorem}
\end{keyword}

\end{frontmatter}

\section{Introduction}

Gibbs samplers are Markov chain Monte Carlo (MCMC) algorithms that stem from statistical mechanics. There, under the name of heat bath or Glauber dynamics, they were introduced to study discrete models, in particular their phase transition (\cite{MR148410}). Following~\cite{gemangeman} and \cite{gelfand1990sampling} Gibbs samplers enjoy tremendous popularity also in mainstream computational sciences. They are used for sampling from intractable high-dimensional probability distributions $\pi$ by iteratively sampling from the conditional distributions of $\pi$ which for many modelling settings, and commonly in the Bayesian paradigm, are more tractable.

The Gibbs samplers are often preferred over other approaches as they tend to be straightforward to implement and easy to use (\cite{MR1210421}). Choosing a scan strategy -- the order in which the individual coordinates of the $d$-dimensional $\pi$ shall be updated from their conditional distributions -- is the key step in setting up a Gibbs sampler. Two common choices are the {\it random scan Gibbs sampler}  ($\PM_{\rm RSG}$) and the  {\it  deterministic scan Gibbs sampler} ($\PM_{\rm DSG}$). The former updates one randomly chosen coordinate at a time while the  latter updates the coordinates one-by-one in a deterministic order. While choosing a particular scan does not represent a problem at the first glance, and any of $\PM_{\rm RSG}$ or the $d!$ possible $\PM_{\rm DSG}$ versions will result in a valid algorithm that has $\pi$ as its invariant distribution, specific guidance on the choice is scarce. In fact, no scan strategy is universally optimal, as shown by \cite{Roberts2015SurprisingCP} and \cite{HeDeSaMiRe}. Furthermore, the efficiency of $\PM_{\rm RSG}$ depends non-trivially on the vector of selection probabilities (\cite{robertssahu97, chimisov2018adapting}). However, the theory of their fundamental properties that would usefully inform about the possible differences between scans is in short supply.

\begin{figure}
\begin{tabular}{cccc} 
  %\hline
  & ${\PM_{\rm RSG}}$ & & ${\PM_{\rm DSG}}$ \\% \hline  
  spectral gap in $L^2(\pi)$ & $+$ & {$\Longleftrightarrow$} & $+$  \\ 
   &{$\big\Updownarrow$} &  &  \;\;{$\big\Downarrow$}\;\; {${\big\Uparrow}{?}$} \\ 
  geometric ergodicity & $+$& \large{$\substack{\;\;\Longrightarrow_{\phantom{k}} \\{?\Longleftarrow^{\phantom{k}}}}$ }& $+$ \\ 
  &{$\phantom{\bigg\Updownarrow}$} &  &  \;\;{} 
\end{tabular}
\caption{\label{picture} Relations between properties of Gibbs samples due to Theorem~\ref{theo:main} (upper $\Longleftrightarrow$), Corollary~\ref{cor:main} ({\it (i)} is lower $\Longrightarrow$, {\it (ii)} is left $\Uparrow$), together with possible equivalent implications  with question marks (cf. Open Problem~\ref{op} for $\Longleftarrow$). All $\Downarrow$ result from \cite[Theorem~1.3]{kontoyiannis2012geometric}.}
\end{figure}

 The spectral gap property of Markov transition operators is central to Markov chain analysis and closely linked to convergence rates and other stability characteristics of MCMC (\cite{MT2009, MR1952001, kipnis1986central}). With the aim of extending the current understanding of general state space Gibbs samplers, in the present paper we offer a complete characterization of the spectral gap property for different scan strategies by showing the following dichotomy:
 
 \begin{thm}\label{theo:main} Let $\PM_{\rm DSG}$ and  $\PM_{\rm RSG}$ be Gibbs samplers targeting $\pi$. Then exactly one of the following holds:
\begin{enumerate}[(i)]
 \item None of the Gibbs samplers, $\PM_{\rm DSG}$'s or  $\PM_{\rm RSG}$'s,  have a positive  $L^2(\pi)$ spectral gap;
 \item All of the Gibbs samplers, $\PM_{\rm DSG}$'s and  $\PM_{\rm RSG}$'s  have a positive  $L^2(\pi)$ spectral gap.
\end{enumerate}
\end{thm}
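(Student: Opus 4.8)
The plan is to read both families of samplers off a single picture in the Hilbert space $H:=L^2_0(\pi)=\{f\in L^2(\pi):\Ex_\pi f=0\}$. Updating the $i$-th coordinate from its full conditional is the conditional expectation $f\mapsto\Ex_\pi[f\mid\mathcal F_{-i}]$, where $\mathcal F_{-i}$ is the $\sigma$-field generated by all coordinates except the $i$-th; on $H$ this operator is the orthogonal projection $\Pi_i$ onto the closed subspace $M_i:=\{f\in H: f\ \text{is}\ \mathcal F_{-i}\text{-measurable}\}$, and I write $Q_i:=I-\Pi_i$ for the complementary projection. Then $\PM_{\rm RSG}$ with positive selection weights $(\alpha_i)_{i=1}^d$ acts on $H$ as the self-adjoint operator $\sum_i\alpha_i\Pi_i$, and $\PM_{\rm DSG}$ with scan order $\sigma\in S_d$ acts as the product $\Pi_{\sigma(1)}\cdots\Pi_{\sigma(d)}$ --- precisely the cyclic alternating-projection operator of von Neumann and Halperin for the family $\{M_i\}_{i=1}^d$. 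For each of these Markov operators, having a positive $L^2(\pi)$ spectral gap is equivalent to its operator norm on $H$ being strictly below $1$; concretely
\[ \mathrm{Gap}(\PM_{\rm RSG})=\inf_{0\neq f\in H}\frac{\sum_{i}\alpha_i\|Q_if\|^2}{\|f\|^2}\,,\qquad \mathrm{Gap}(\PM_{\rm DSG})=1-\bigl\|\Pi_{\sigma(1)}\cdots\Pi_{\sigma(d)}\bigr\|_{H}\,, \]
so the theorem reduces to comparing these two quantities uniformly over all weights $(\alpha_i)$ and all orders $\sigma$.

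\textbf{The comparison.} It rests on one telescoping identity. Fix $\sigma$ (after relabelling coordinates, $\sigma=\mathrm{id}$) and $f\in H$, and set $z_d:=f$, $z_{k-1}:=\Pi_kz_k$, so that $z_0=\Pi_1\cdots\Pi_d\,f$. Since $z_{k-1}\in M_k$ is orthogonal to $z_k-z_{k-1}=Q_kz_k\in M_k^{\perp}$, applying Pythagoras at each of the $d$ steps and telescoping gives
\[ \|f\|^2-\bigl\|\Pi_1\cdots\Pi_d\,f\bigr\|^2=\sum_{k=1}^{d}\|Q_kz_k\|^2\,,\qquad\text{and, moreover,}\qquad\|f-z_k\|\le\sum_{j=k+1}^{d}\|Q_jz_j\|\,. \]
The first identity displays the spectral deficiency of the deterministic scan as a sum of one-step deficiencies $\|Q_kz_k\|^2$; these differ from the one-step deficiencies $\|Q_kf\|^2$ that drive the random scan only through the drift $\|f-z_k\|$ of the auxiliary sequence, which the second inequality bounds by those very deficiencies. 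Feeding this into $\bigl|\,\|Q_kf\|-\|Q_kz_k\|\,\bigr|\le\|f-z_k\|$ and summing, a short Cauchy--Schwarz computation produces explicit constants $0<c_1(d)\le c_2(d)<\infty$ depending only on $d$ with
\[ c_1(d)\sum_{i=1}^{d}\|Q_if\|^2\ \le\ \|f\|^2-\bigl\|\Pi_1\cdots\Pi_d\,f\bigr\|^2\ \le\ c_2(d)\sum_{i=1}^{d}\|Q_if\|^2\qquad(f\in H)\,. \]
Dividing by $\|f\|^2$, taking the infimum over $f$, and using that $\sum_i\alpha_i\|Q_if\|^2$ lies between $(\min_i\alpha_i)\sum_i\|Q_if\|^2$ and $(\max_i\alpha_i)\sum_i\|Q_if\|^2$, the left inequality shows $\mathrm{Gap}(\PM_{\rm DSG})>0$ whenever $\mathrm{Gap}(\PM_{\rm RSG})>0$ for one, hence every, positive weight vector, and the right inequality gives the converse. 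Since the condition $\inf_{0\neq f\in H}\|f\|^{-2}\sum_i\|Q_if\|^2>0$ depends neither on $\sigma$ nor on $(\alpha_i)$, all $d!$ deterministic scans and all random scans possess the spectral gap together or not at all --- which is the dichotomy, with alternative (ii) in force when that infimum is positive and alternative (i) otherwise. Keeping track of $c_1(d)$ promotes the left inequality to the quantitative statement: a polynomial-in-$d$ lower bound on $\mathrm{Gap}(\PM_{\rm RSG})$ forces a polynomial lower bound on $\mathrm{Gap}(\PM_{\rm DSG})$ for every scan, with the exponent degrading in the stated controlled way.

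\textbf{Main obstacle.} It is the sharpness of the drift estimate --- recovering the $d$ one-step deficiencies $\|Q_if\|$ from the accumulated effect of $d$ successive projections with constants $c_1(d),c_2(d)$ good enough for the claimed quantitative degradation rather than merely a positive one. This is genuinely the hard direction of the von Neumann--Halperin theory: in infinite dimensions, already for $d\ge3$, the cyclic product $\Pi_1\cdots\Pi_d$ can have operator norm exactly $1$ on $H$ without attaining it, so no compactness shortcut is available; equivalently, one must control quantitatively the \emph{generalized Friedrichs angle} of $\{M_i\}_{i=1}^d$ --- whose strict positivity, together with $\bigcap_iM_i=\{0\}$, is the clean geometric characterization of all these gaps (the first condition being closedness of $M_1^{\perp}+\dots+M_d^{\perp}$ in $H$) --- in terms of the averaged projection $\sum_i\alpha_i\Pi_i$. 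For this step I would import, or re-derive with optimized $d$-dependence, the relevant angle estimates from the alternating-projections literature. The remaining ingredients --- the projection description of the Gibbs updates, the identification of a positive $L^2(\pi)$ spectral gap with operator norm $<1$, and the harmless treatment of the degenerate case $\bigcap_iM_i\neq\{0\}$ (on which every operator in sight restricts to the identity, so no sampler has a gap: this lands in alternative (i)) --- are routine.
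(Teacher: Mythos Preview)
Your overall architecture is very close to the paper's: both reduce everything to the alternating-projection picture on $L^2_0(\pi)$, and your telescoping identity $\|f\|^2-\|\Pi_1\cdots\Pi_d f\|^2=\sum_k\|Q_kz_k\|^2$ together with the drift bound $\|f-z_k\|\le\sum_{j>k}\|Q_jz_j\|$ is exactly the content of the paper's Lemma~\ref{lem:4.2} and the proof of Lemma~\ref{lem:norm_determ} (both taken from Badea--Grivaux--M\"uller). Your lower sandwich bound $c_1(d)\sum_i\|Q_if\|^2\le\|f\|^2-\|\Pi_1\cdots\Pi_df\|^2$ is thus the same argument that the paper packages through the inclination $\ell(M_1,\dots,M_d)$ and the Friedrichs angle; the paper merely routes it through Proposition~\ref{prop:3.7badea} to make the geometric constants explicit. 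Your upper sandwich bound is a nice touch the paper does not isolate.

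There is, however, a genuine gap. You write $\mathrm{Gap}(\PM_{\rm DSG})=1-\|\Pi_{\sigma(1)}\cdots\Pi_{\sigma(d)}\|_H$ and later call ``the identification of a positive $L^2(\pi)$ spectral gap with operator norm $<1$'' routine. For $d\ge3$ the operator $\PM_{\rm DSG}$ is not normal, so only the inequality $\mathrm{Gap}\ge 1-\|\cdot\|$ is automatic; the converse implication --- that a positive spectral gap forces $\|\PM_{\rm DSG}-\PiM\|<1$ --- is precisely the nontrivial equivalence $(iii)\Leftrightarrow(v)$ of the paper's Theorem~\ref{theo:equiv}, which the paper explicitly flags and handles by invoking \cite[Proposition~3.2]{roberts1997geometric} for the step $(iii)\Rightarrow(i)$ and then closing the loop via the Friedrichs angle. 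Your sandwich argument as stated only delivers ``$\|\PM_{\rm DSG}\|_H<1\Leftrightarrow\mathrm{Gap}(\PM_{\rm RSG})>0$''; it does not yet show that a positive spectral \emph{radius} gap of $\PM_{\rm DSG}$ implies anything. The fix is available within your own framework: if $\PM_{\rm DSG}$ has a spectral gap then $\|\PM_{\rm DSG}^{n_0}\|_H<1$ for some $n_0$, and $\PM_{\rm DSG}^{n_0}$ is itself a product of $n_0d$ orthogonal projections (with repetitions), so applying your upper sandwich bound to that longer product yields $\inf_f\|f\|^{-2}\sum_i\|Q_if\|^2>0$ and hence $\mathrm{Gap}(\PM_{\rm RSG})>0$. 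But this step must be made explicit; it is not routine.
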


Hence, qualitatively, all the deterministic and random scan Gibbs samplers are equally good (see also Figure~\ref{picture}).  We refer to this dichotomy as solidarity principle of the spectral gap. Moreover, since for reversible Markov chains spectral gap is equivalent to geometric ergodicity (GE for short), \cite{MT2009}, the above result implies the following:

\begin{cor}\label{cor:main}
 Assume GE of any random scan Gibbs sampler targeting $\pi$. Then:
\begin{enumerate}[(i)]
 \item all the Gibbs samplers, $\PM_{\rm DSG}$'s and  $\PM_{\rm RSG}$'s, are geometrically ergodic;
 \item all the Gibbs samplers, $\PM_{\rm DSG}$'s and  $\PM_{\rm RSG}$'s,  have a positive $L^2(\pi)$ spectral gap.
\end{enumerate}
\end{cor}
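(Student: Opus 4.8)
The plan is to derive both assertions from Theorem~\ref{theo:main} by splicing in the classical equivalence --- valid for $\pi$-reversible chains --- between geometric ergodicity and a positive $L^2(\pi)$ spectral gap (\cite{MT2009}). The first observation is that every random scan Gibbs sampler is $\pi$-reversible: each single-coordinate update $P_i$ is the orthogonal projection in $L^2(\pi)$ onto the functions measurable with respect to the variables other than the $i$-th one (in particular it is self-adjoint), and $\PM_{\rm RSG}=\sum_i p_i P_i$, being a convex combination of such operators, is again self-adjoint, hence $\pi$-reversible. Consequently, under the hypothesis that some $\PM_{\rm RSG}$ is geometrically ergodic, \cite{MT2009} guarantees that this particular $\PM_{\rm RSG}$ has a positive $L^2(\pi)$ spectral gap.

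Next I would feed this one sampler with a positive gap into Theorem~\ref{theo:main}: its alternative~(i) is thereby excluded, so we are necessarily in case~(ii), i.e.\ every $\PM_{\rm DSG}$ and every $\PM_{\rm RSG}$ has a positive $L^2(\pi)$ spectral gap. This is precisely part~(ii) of the corollary, and no further argument is needed for it --- the substantive work is contained entirely in Theorem~\ref{theo:main}, which here is taken as given.

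For part~(i) it remains to pass back from a positive $L^2(\pi)$ spectral gap to geometric ergodicity for each sampler. For the $\PM_{\rm RSG}$'s this is again immediate from reversibility and \cite{MT2009}. The point requiring care is the $\PM_{\rm DSG}$'s, which are genuinely non-reversible --- the $L^2(\pi)$-adjoint of $P_1\cdots P_d$ is the reverse-order product $P_d\cdots P_1$, not $P_1\cdots P_d$ itself --- so the reversible equivalence does not apply. Here I would invoke \cite[Theorem~1.3]{kontoyiannis2012geometric}, which yields that a positive $L^2(\pi)$ spectral gap implies geometric ergodicity with no reversibility assumption (these are the downward implications $\Downarrow$ in Figure~\ref{picture}); combined with the conclusion of the previous paragraph this gives part~(i). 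The main obstacle is conceptual rather than technical: one must keep track of the \emph{direction} of this last implication, since for non-reversible chains it is one-sided, and the converse --- geometric ergodicity of a $\PM_{\rm DSG}$ forcing a spectral gap --- is not obtained this way and remains open (cf.\ Open Problem~\ref{op}). The remaining verifications are routine.
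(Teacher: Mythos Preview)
Your proof is correct and follows essentially the same route as the paper's: use reversibility of $\PM_{\rm RSG}$ to pass from geometric ergodicity to a spectral gap (the paper cites \cite{roberts1997geometric} and \cite{kontoyiannis2012geometric} rather than \cite{MT2009}, but the content is the same), invoke the solidarity result (you use Theorem~\ref{theo:main}, the paper uses Theorem~\ref{theo:equiv} directly) to get part~(ii), and then apply \cite[Theorem~1.3]{kontoyiannis2012geometric} for the non-reversible implication back to geometric ergodicity. Your added explanation of why $\PM_{\rm RSG}$ is reversible and your explicit separation of the $\PM_{\rm RSG}$ and $\PM_{\rm DSG}$ cases in part~(i) are expository refinements, not a different argument.
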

 {Proof of the above corollary is provided in Section~\ref{sec:gaps} right after the proof of Theorem~\ref{theo:main}.}

However, to establish an analogous solidarity principle of GE for Gibbs samplers, one would need to address the following open problem:

 \begin{open}\label{op} Does GE of $\PM_{\rm DSG}$ imply GE of $\PM_{\rm RSG}$?
 \end{open}
 Equivalently, one could also ask if GE of $\PM_{\rm DSG}$ implies its spectral gap. The answer is positive when $\PM_{\rm DSG}$ is reversible, that is in dimension $2$. Otherwise, it is a nontrivial issue discussed in Section~\ref{sec:op}.
 
 Main tools for our proof are provided in Section \ref{sec:gaps} and Theorem~\ref{theo:equiv}. In particular, it implies that despite  $\PM_{\rm DSG}$ being non-reversible, it satisfies an important condition enjoyed by reversible operators: the existence of its spectral gap is equivalent to $\|\PM_{\rm DSG}-\PiM\|<1$. We explore this to establish that deterministic scan Gibbs samplers enjoy 
 many of the key ergodic properties under the same minimal conditions as random scans and other reversible chains: the Hoeffding inequality of~\cite{miasojedowhoeffding} holds for  $\PM_{\rm DSG}$, as does the Central Limit Theorem for functions merely in~$L^2(\pi)$, see Theorem~\ref{thm:clt}, with explicit bounds on the asymptotic variance in terms of the spectral gap. 
 
 The rate of decay of the Markov operator spectral gap with dimension is
 key to understanding how its computational efficiency scales with the problem size. Rapid and torpid mixing, which stand for the polynomial and exponential decay, respectively, have been extensively studied for various algorithms in the settings from statistical physics (\cite{MR1943860,MR2291087}) and Bayesian posterior sampling (\cite{rapid,torpid}). For the Gibbs samplers, there was a longstanding conjecture (c.f. \cite{MR3102552, MR2466937}) that deterministic scan was at most a constant factor slower and at most a logarithmic factor faster than random scan. However, \cite{Roberts2015SurprisingCP} and \cite{HeDeSaMiRe} provided counterexamples for both parts of this claim.

\begin{figure}
\begin{tabular}{cccc}  
  & ${\PM_{\rm RSG}}$ & & ${\PM_{\rm DSG}}$ \\% \hline
  rapid mixing & $+$& \large{$\substack{\;\;\Longrightarrow_{\phantom{k}} \\{?\Longleftarrow^{\phantom{k}}}}$ }& $+$ \\ 
  %\hline
\end{tabular}
\caption{\label{picture2} Relations between properties of Gibbs samples due to Corollary~\ref{cor:poly}, together with a possible implication from Open Problem \ref{open:poly} shown with a question mark.}
\end{figure}

 In this spirit, our Corollary~\ref{cor:bounds} provides quantitative estimates on the spectral gaps for $\PM_{\rm DSG}$ and $\PM_{\rm RSG}$ in terms of geometry of the operators. The estimate for $\PM_{\rm RSG}$ is sharp and the result implies (via Corollary~\ref{cor:poly}) 
 %that the deterministic scan mixes at most a linear factor slower than the random scan and, equivalently, 
 {   that rapid mixing of $\PM_{\rm RSG}$ implies rapid mixing of $\PM_{\rm DSG}$. More precisely, if the polynomial rate of decay with dimension of the spectral gap of $\PM_{\rm RSG}$ is $\beta$, than it is no worse than $2\beta +2$ for $\PM_{\rm DSG}$. This fact is illustrated by Figure~\ref{picture2}.}  We propose the opposite implication, that rapid mixing of $\PM_{\rm DSG}$ implies rapid mixing of $\PM_{\rm RSG}$, as Open Problem~\ref{open:poly}.

Our paper contributes to a vast body of related research. 
As the speed of convergence of MCMC is of major importance in theory and applications, the efficiency of swap strategies for the Gibbs samplers were studied in many contexts, however, due to the difficulty of the problem, usually under some simplifying assumptions.
The rates of convergence in the Gaussian and approximately Gaussian case were studied in~\cite{amit1991rates,amitgrenander1991,robertssahu97} and \cite{chimisov2018adapting}. Beyond the Gaussian case, but for two-component samplers, \cite{qin2021}  showed that the deterministic scan converges faster in $L^2(\pi)$ than its random scan counterpart, no matter the selection probabilities in the random scan, and provided a precise relation between the two convergence rates. As shown by \cite{Roberts2015SurprisingCP}, such a result can not hold when $d>2$. Further examples were given in \cite{HeDeSaMiRe} where also the relation between random scans and conveniently modified deterministic scans was studied in discrete models and under further specific assumptions.  {We refer to \cite{JJN} and \cite{TJH} for conditions which 
guarantee geometric ergodicity of all versions of two-component Gibbs samplers simultaneously.} A comparison of two dimensional sweep strategies in terms of asymptotic variance was carried out by \cite{greenwood}  { and further extended in \cite{qin2024analysistwocomponentgibbssamplers}}. In such a case deterministic scan has always smaller asymptotic variance. The topic was studied further for broader family of MCMC algorithms in \cite{andrieu2016biometrika,maire2014,andrieu2021}. However, following the examples of \cite{Roberts2015SurprisingCP} and the discussion in \cite{greenwood}, these asymptotic variance results do not extend beyond $d=2$. Let us note here that this is not surprising: in the study of $L^2(\pi)$ convergence reversibility plays the main  {role, cf. \cite{kontoyiannis2012geometric}}. In the two-component case, the Gibbs samplers $\PM_{\rm DSG}=\PM_1\PM_2$ and $\PM_{\rm RSG}=w_1\PM_1+w_2\PM_2$ are both reversible. This results from the fact that for the projection operators $\PM_1,\PM_2$ we have $(\PM_1\PM_2)^n=(\PM_1\PM_2\PM_1)^n$ where $\PM_1\PM_2\PM_1$ is self-adjoint (and $\PM_{\rm RSG}$ is always reversible in any dimension). Higher dimensional $\PM_{\rm DSG}$, that we analyse in this paper, are not reversible anymore and it is well known that passing to multi-component samplers brings in surprising and counterintuitive phenomena (\cite{Roberts2015SurprisingCP}).

Without assuming (approximate) normality of the target, or restricting to $d=2$, relations between the covariance structure and spectral properties of Gibbs samplers were studied  in~\cite{liuwongkong}, where it is also shown that $\PM_{\rm RSG}$ is positive, i.e. its spectrum is contained in $[0,1]$.
Regarding qualitative properties, Proposition 3.2 of
\cite{roberts1997geometric} implies that if $\PM_{\rm DSG}$ is uniformly ergodic, so is $\PM_{\rm RSG}$, see also Proposition~4.8 of \cite{MR3059204} for a more direct argument.

Beyond the general abstract setting, GE of the Gibbs samplers has been studied for particular data models to establish reliability of Bayesian statistical inference in various statistical frameworks, e.g. by~\cite{hobert1998geometric,johnson2010gibbs,jones2004sufficient,marchev2004geometric,tan2009block}, for general hierarchical models in \cite{MR2387965} and, more recently, also in~\cite{zanella2021,papaspiliopoulos2020} where estimates for multilevel hierarchical Gaussian models are provided. In statistical physics, mixing time for the Ising model was studied via the spectral theory since~\cite{desantis2002,martinelli2004}. For more results on Glauber dynamics of spin glasses we refer to later contributions~\cite{benarous2018,eldan2022,gheissari2019,jagannath2019}.
We point out that the abovementioned contributions cover a wide class of problems, but the general theory still calls for a unified approach.

Describing the Gibbs samplers in the terms of the orthogonal projection operators, the approach that we here follow, was initiated by \cite{amit1991rates} who studied a particular case of convergence of Gibbs samplers for Gaussian targets based on the bounds due to~\cite{smithsolmonwagner}. On the other hand, the complete characterization of the $L^2(\pi)$-convergence rate of the two-dimensional Gibbs samplers was provided in \cite{diaconis2010}. To establish the characterisation,~\cite{diaconis2010} apply the classical results by~\cite{aronszajn} on alternating projection algorithms. Our methods rely on a more refined application of the geometric approach to the theory of operators developed by \cite{badea2012rate} in order to analyze the rate of convergence in the von Neumann--Halperin method of cyclic alternating projections. By Proposition~4.6 in~\cite{badea2012rate} their approach yields more precise estimates than those of~\cite{smithsolmonwagner} or~\cite{deutschhundal}.

%In addition to the quantitative solidarity result, in Corollary~\ref{cor:bounds} we provide explicit estimates on the spectral gaps for both Gibbs samplers. The estimate is sharp for $\PM_{\rm RSG}$ and better than the known ones for $\PM_{\rm DSG}$. Similar results are provided in \cite{amit1991rates}, where the rate of convergence of $\PM_{\rm DSG}$ as an alternating projection method is derived using inequalities due to \cite{smithsolmonwagner}. Other estimates that can underpin the understanding of Gibbs samplers were proven in~\cite{deutschhundal,badea2012rate}. 

 \medskip
 
The rest of the paper is organized as follows. In Section~\ref{sec:prelim} we introduce notation and background material on the the geometric approach. Section~\ref{sec:gaps} is devoted to the spectral properties of the Gibbs samplers. In particular, we present there Theorem~\ref{theo:equiv} together with its proof, followed by the proof of Theorem~\ref{theo:main}. Section~\ref{sec:CTG} provides additional results, including the central limit theorem and the exponential inequality for the $\PM_{\rm DSG}$, while the discussion of Open Problem~\ref{op} is in Section~\ref{sec:op}.

\section{Preliminaries}\label{sec:prelim}

\textbf{Notation. } In the sequel by $I$ we denote the identity operator and by $\Im$ we denote the image (range) of an operator. We set $\mathcal{X} = \stany_1 \times \dots \times \stany_d$ -- a Polish space (metric, separable, and complete) and  $X_n = (X_{n,1}, \dots, X_{n,d})\in \mathcal{X}$. We shall use the shorthand notation
$$
X_{n,-i} \ := \
(X_{n,1}, \dots, X_{n,i-1}, X_{n,i+1},
\dots, X_{n,d})
\, ,
$$
and similarly $\stany_{-i} = \stany_1\times \dots
\times \stany_{i-1}\times \stany_{i+1}\times \dots\times \stany_d$.  Let us consider a~probability measure on $\mathcal{X}$ with a density $\pi$ that will be the target distribution. We denote a  Hilbert space $L^2(\pi)$ as a space of functions $f:\stany\to \mathbb{R}$
with $\pi f^2 =\int_\stany f^2(x)\pi(x)\,dx<\infty$ equipped with an inner product $\langle f,g\rangle=\int_\stany f(x)
g(x)\pi(x)\,dx$.  {We denote the norm in $L^2(\pi)$ by $\Vert \cdot\Vert$.} 
We denote by $\PiM$ a linear operator on $L^2(\pi)$, such that for any function $f\in L^2(\pi)$ we have \begin{equation}
    \label{PiM} \PiM f =\pi(f).
\end{equation} \newline

\noindent \textbf{Gibbs samplers targetting $\pi$. }  
For $Z \sim \pi$ by
$\pi(\cdot| x_{-i})$ we denote the conditional distribution of $Z_{i} \, |
\, Z_{-i}=x_{-i}$.
By a {\it small step} $\PM_i$ we mean the transition operator that replaces $X_n = x_n\in \mathcal{X}$ by $$X_{n+1} \; = \; (X_{n,1}, \dots, X_{n, i-1}, Z_i, X_{n, i+1}, \dots, X_{n, d})\,,$$ where $Z_i \sim \pi(\cdot| x_{-i}).$ Then the {\it  deterministic scan Gibbs sampler} is an algorithm that updates a coordinate using the transition operator  \begin{equation} \label{eqn_gibbs_kernel_DS}
\PM_{\rm DSG} \; = {\; \PM_d\PM_{d-1}\cdots\PM_1}\;= \; \prod_{i=1}^d \PM_i\,,
\end{equation} whereas the {\it random scan Gibbs sampler} uses the transition operator 
\begin{equation} \label{eqn_Gibbs_kernel_RS}
\PM_{\rm RSG} \; = \; \sum_{i=1}^d w_i \PM_i\,,
\end{equation}
where weights $w_i$  satisfy\begin{equation}
    \label{weights}\text{ $w_i>0$ for every $i=1,\dots d\ $ and $\ \sum_{i=1}^d w_i =1\,$.}
\end{equation} A typical choice of weights  in~\eqref{eqn_Gibbs_kernel_RS} is $w_i=\frac{1}{d}$. We recall that unlike the deterministic scan Gibbs sampler, the random scan Gibbs sampler is reversible.\newline

\noindent \textbf{Spectrum and Markov chains. } By $Spec(T)$ we mean a {\it spectrum} of  a bounded linear operator $T$, that is a set of numbers $\lambda$ for which $(T-\lambda I)$ is not invertible, cf. e.g.~\cite{Rudin}.

We refer to~\cite{MT2009} for basic information on Markov chains. For any Markov kernel $P(x,dy)$ we denote by $\PM$ the associated  linear operator which acts on functions from $L^2(\pi)$, so that
\[\PM f(x)=\int_\stany f(y) P(x, d y)\;.\] 
 {Throughout the paper we will assume that considered Markov chains are $\pi$-irreducible.}
A Markov chain $(X_n)_{n\geq
0}$ with transition kernel $P(x,dy)$ and stationary distribution $\pi$ is called {\it geometrically ergodic} if  there exists $C(x)<\infty$ and $\vr<1$ such that $\Vert P^n(x,\cdot)-\pi(\cdot)\Vert_{TV}\leq C\vr^n$ holds for every $n$  { and a.e $x\in\stany$}.

Note that since $\PM\PiM=\PiM\PM=\PiM$ and $\PiM^n=\PiM$ we have the following fact. 
\begin{lemma}\label{lem:nnorm}
If $\PM$ is transition operator of a Markov chain with stationary distribution~$\pi$, then for every $n$ we have 
\[
\Vert \PM^n-\PiM\Vert=\Vert (\PM-\PiM)^n\Vert\leq\Vert \PM-\PiM\Vert^n\,.
\]
\end{lemma}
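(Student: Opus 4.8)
The plan is to reduce the stated identity to the single algebraic fact $(\PM-\PiM)^n=\PM^n-\PiM$, after which the inequality is immediate from submultiplicativity of the operator norm on $L^2(\pi)$. First I would record the elementary consequences of $\PM\PiM=\PiM\PM=\PiM$ and $\PiM^2=\PiM$ noted just before the lemma: iterating these gives $\PM^k\PiM=\PiM$ and $\PiM\PM^k=\PiM$ for every $k\geq 1$. Here I am implicitly using that $\PM$ is a bounded operator on $L^2(\pi)$ (indeed a contraction, by Jensen's inequality applied to the Markov kernel with invariant density $\pi$), so that all the products of operators below are well defined and the operator norm is submultiplicative.

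Next I would prove $(\PM-\PiM)^n=\PM^n-\PiM$ by induction on $n$, the case $n=1$ being trivial. For the inductive step, assuming $(\PM-\PiM)^n=\PM^n-\PiM$, I multiply on the right by $\PM-\PiM$ and expand:
\[
(\PM-\PiM)^{n+1}=(\PM^n-\PiM)(\PM-\PiM)=\PM^{n+1}-\PM^n\PiM-\PiM\PM+\PiM^2=\PM^{n+1}-\PiM,
\]
where the last equality uses $\PM^n\PiM=\PiM$, $\PiM\PM=\PiM$ and $\PiM^2=\PiM$. (Alternatively one can invoke the binomial expansion directly, since $\PM$ and $\PiM$ commute because $\PM\PiM=\PiM=\PiM\PM$, and collapse all mixed terms using $\PiM^j=\PiM$ and $\PM^k\PiM=\PiM$.)

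Finally, applying submultiplicativity of the operator norm $n-1$ times gives $\Vert(\PM-\PiM)^n\Vert\leq\Vert\PM-\PiM\Vert^n$, and combining this with the identity just established yields $\Vert\PM^n-\PiM\Vert=\Vert(\PM-\PiM)^n\Vert\leq\Vert\PM-\PiM\Vert^n$, as claimed. There is no genuine obstacle in this lemma; the only point worth a sentence is the boundedness (contractivity) of $\PM$ on $L^2(\pi)$, which is what makes the algebraic manipulations and the norm estimate legitimate.
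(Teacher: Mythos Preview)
Your proof is correct and follows exactly the approach the paper intends: the lemma is stated without proof in the paper, preceded only by the remark that $\PM\PiM=\PiM\PM=\PiM$ and $\PiM^n=\PiM$, which are precisely the identities you use. Your induction (or binomial) argument to get $(\PM-\PiM)^n=\PM^n-\PiM$, followed by submultiplicativity of the operator norm, is the standard and evidently intended justification.
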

  We say that a Markov chain $(X_n)_{n\geq
0}$ with a transition operator $\PM$ and a stationary distribution $\pi$ has a {\it positive spectral gap} $(1-\vr)$ if and only if
\[\sup\{ |\lambda|\;:\; \lambda\in Spec(\PM-\PiM) \}=\vr<1\;,\]
where $Spec$ denotes the spectrum of operator $(\PM-\PiM)$ in $L^2(\pi)$. Then the spectral radius formula reads $\vr=\limsup_{n\to\infty}\|(\PM-\PiM)^n\|^\frac{1}{n}$, see \cite[Theorem~18.9]{Rudin}. Therefore, by Lemma~\ref{lem:nnorm}, we have \[\vr=\limsup_{n\to\infty}\|\PM^n-\PiM\|^\frac{1}{n}\,.\]
In turn, a Markov chain has a spectral gap if and only if there exists $n_0$ such that
\begin{equation}
    \label{MCSG}
\Vert \PM^{n_0}-\PiM\Vert<1\;.
\end{equation}
If in addition a Markov chain is reversible, then $\PM$ and $\PM -\PiM$ are self-adjoint operators and for any $n$ it holds that
$\Vert \PM^n-\PiM\Vert=\Vert \PM-\PiM\Vert^n$. So, in the reversible case    {$\vr=\Vert \PM-\PiM\Vert$}.\newline

\noindent \textbf{Geometry. }  For $i=1,\dots,d$ we define a subspace $M_i$ as subspace of functions $f\in L^2(\pi)$ which are constant with respect to $i$-th coordinate. We denote by $M:=\bigcap_{i=1}^d M_i$. Note that $M$ contains only
constant functions.

\medskip

It is easy to see that a small step  $\PM_i$, from the definitions of the Gibbs samplers,  is an orthogonal projection on $M_i$  for every $i$. Indeed, since operator $\PM_i(x,\cdot)$ is reversible with respect to $\pi$ so $\PM_i$ is a self-adjoint operator and $\PM_i^2=\PM_i$. Moreover, $\PiM$ is an orthogonal projection on $M$. We will use the following fact.
\begin{lemma}[\cite{badea2012rate}, Lemma 4.2] Let $f\in L^2(\pi)$. Suppose  $\PM_{\rm DSG}$ targeting $\pi$ is given by~\eqref{eqn_gibbs_kernel_DS} and $\PiM$ is given by~\eqref{PiM}.  Then for every $j\in [1,d]$, we have
\[\|\PM_{j-1}\dots \PM_1 f - \PM_j\dots \PM_1 f\|^2 \leq \|f-\PiM f\|^2 - \| \PM_{{\rm DSG}}f - \PiM f\|^2\,.\]
\label{lem:4.2}
\end{lemma}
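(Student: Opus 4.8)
The plan is to exploit the Pythagorean theorem repeatedly along the chain of projections $f \mapsto \PM_1 f \mapsto \PM_2\PM_1 f \mapsto \dots \mapsto \PM_{\rm DSG} f$. The key structural fact is that each $\PM_i$ is an orthogonal projection, so for any $g \in L^2(\pi)$ the vector $g - \PM_i g$ is orthogonal to $\PM_i g$, and more generally $\|g\|^2 = \|\PM_i g\|^2 + \|g - \PM_i g\|^2$. First I would set $g_0 = f$ and $g_j = \PM_j \PM_{j-1}\cdots \PM_1 f$ for $j=1,\dots,d$, so that $g_d = \PM_{\rm DSG} f$. Since $\PiM$ projects onto $M = \bigcap_i M_i$ and $M \subseteq M_i = \Im \PM_i$, we have $\PM_i \PiM = \PiM$ for each $i$; hence $\PiM g_j = \PiM f$ for all $j$, and I may replace $f$ throughout by $f - \PiM f$ without changing any of the differences $g_{j-1} - g_j$. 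In other words, it suffices to prove the inequality under the normalization $\PiM f = 0$, where it reads
\[
\|g_{j-1} - g_j\|^2 \;\le\; \|f\|^2 - \|g_d\|^2 .
\]

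Next I would telescope. Applying Pythagoras to $\PM_j$ at the vector $g_{j-1}$ gives $\|g_{j-1}\|^2 = \|g_j\|^2 + \|g_{j-1} - g_j\|^2$, i.e. $\|g_{j-1}\|^2 - \|g_j\|^2 = \|g_{j-1} - g_j\|^2 \ge 0$. Summing this over $j = 1,\dots,d$ yields the full telescoping identity
\[
\|f\|^2 - \|\PM_{\rm DSG} f\|^2 \;=\; \sum_{j=1}^d \|g_{j-1} - g_j\|^2 .
\]
Since every summand on the right is nonnegative, each individual term $\|g_{j-1} - g_j\|^2$ is bounded above by the whole sum, which is exactly $\|f - \PiM f\|^2 - \|\PM_{\rm DSG} f - \PiM f\|^2$ once the normalization is undone. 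This is precisely the claimed inequality.

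The only point requiring a little care — and the closest thing to an obstacle — is the reduction $\PiM g_j = \PiM f$, i.e. that applying the small steps does not move the projection onto the constants; this follows because constant functions are fixed by each $\PM_i$ (they lie in every $M_i$), so $\PM_i^* \PiM = \PM_i \PiM = \PiM$ and consequently $\langle g_j, c\rangle = \langle f, c\rangle$ for every constant $c$. Everything else is the repeated use of orthogonality of the projection splitting, which is routine. No appeal to reversibility of $\PM_{\rm DSG}$ itself is needed — only that each factor $\PM_i$ is an orthogonal projection, which was already established in the excerpt.
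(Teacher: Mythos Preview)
Your argument is correct. The paper itself does not supply a proof of this lemma; it is quoted verbatim from \cite{badea2012rate} (their Lemma~4.2) and used as a black box in the proof of Lemma~\ref{lem:norm_determ}. Your proof is the standard one: the telescoping Pythagorean identity
\[
\|f-\PiM f\|^2-\|\PM_{\rm DSG}f-\PiM f\|^2=\sum_{j=1}^d\|g_{j-1}-g_j\|^2
\]
together with nonnegativity of each summand is exactly how the result is established in \cite{badea2012rate}, so there is no genuine methodological difference to report. Your handling of the reduction $\PiM g_j=\PiM f$ (via $\PM_i\PiM=\PiM$, i.e.\ stationarity of $\pi$ under each small step) is the only nontrivial point and you dispatch it correctly.
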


In order to make use of some results of \cite{badea2012rate} we need to recall some objects introduced therein. We define the generalized Friedrichs angle $c(M_1,\dots,M_d)$ by
\begin{align}\label{c:def}
c(M_1,\dots,M_d)&:=\sup\left\{\frac{\sum_{i\neq j} \langle f_j,f_i\rangle}{(d-1)\sum_{i=1}^d \langle f_i,f_i\rangle}:\;\ f_i\in M_i\cap M^\bot,\ \sum_{i=1}^d \Vert f_i\Vert>0 \right\}\,.
\end{align}
In the case of two spaces this term was introduced in~\cite{friedrichs}. We point out that the notion of angle between Banach spaces is still attracting attention. See~\cite{oppenheim} for recent development.  {We also note that a quantity related to $c(M_1,\dots,M_d)$, called summation-based correlation coefficient,  was studied in the context of telescope property of spectral gap for Gibbs samplers by \cite{Qin2024}. Summation-based correlation coefficient is called a configuration constant by~\cite{badea2012rate} and they show it is equivalent (up to scaling and translation) to the generalized Friedrichs angle, see \cite[Proposition 3.6]{badea2012rate}.}

Let us note that from the very definition of norm and due to the fact that $M$ contains only constant functions we have
\[c(M_1,\dots,M_d)=\sup\left\{\frac{\sum_{i\neq j} \langle f_j,f_i\rangle}{(d-1)\sum_{i=1}^d \langle f_i,f_i\rangle}:\;\ f_i\in M_i,\ \pi f_i =0, \ \sum_{i=1}^d \pi f_i^2>0 \right\}\,.
\]
We also consider inclination  $\ell(M_1,\dots,M_d)$ as
\begin{equation}
    \label{l-def}
 \ell(M_1,\dots,M_d):=\inf_{f:\ \dist(f, M)=1}{\max_{1\leq i\leq d} \dist(f,M_i)}\;.
\end{equation}

We will use the following auxiliary fact resulting directly from \cite[Propositions~3.7 and~3.9]{badea2012rate}.
\begin{prop}\label{prop:3.7badea} Let $d\geq 2$. Suppose $M_1,\dots,M_d$ are closed subspaces of $L^2(\pi)$ with intersection $M=\bigcap_{i=1}^d M_d$. Assume further that $\PM_i$ for every $i$ is an orthogonal projection on $M_i$ and $\PiM$ is an orthogonal projection on $M$. Then\begin{equation}\label{eq:norm_random_scan}
\left\Vert \frac{1}{d} \sum_{i=1}^d \PM_i -\PiM\right\Vert=\frac{d-1}{d}\left[c(M_1,\dots,M_d)+\frac{1}{d-1}\right]
\end{equation}
and
\begin{equation}\label{eq:inclination}1-\frac{2d}{d-1}\ell(M_1,\dots,M_d)\leq c(M_1,\dots,M_d)\leq 1-\frac{1}{d-1}\ell^2(M_1,\dots,M_d)
\;.\end{equation}
\end{prop}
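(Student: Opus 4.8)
The plan is to transcribe the corresponding statements of \cite{badea2012rate} into our Hilbert-space setting and check that the hypotheses match. The key point is that \cite[Propositions~3.7 and~3.9]{badea2012rate} are formulated for an arbitrary finite family of closed subspaces $M_1,\dots,M_d$ of a Hilbert space $H$, with $P_i$ the orthogonal projection onto $M_i$ and $P_M$ the orthogonal projection onto $M=\bigcap_i M_i$; here we take $H=L^2(\pi)$, and we have already observed (just before Lemma~\ref{lem:4.2}) that each small step $\PM_i$ is the orthogonal projection onto $M_i$ and that $\PiM$ is the orthogonal projection onto $M$. So the only thing to verify is that their configuration constant / Friedrichs angle and their inclination coincide with the quantities $c(M_1,\dots,M_d)$ and $\ell(M_1,\dots,M_d)$ defined in~\eqref{c:def} and~\eqref{l-def}.

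First I would record~\eqref{eq:norm_random_scan}. Write $S=\frac1d\sum_{i=1}^d\PM_i$. Since each $\PM_i$ is self-adjoint and fixes $M$ pointwise while annihilating $M^\perp\cap\bigl(\sum_i M_i^\perp\bigr)$-type directions, $S-\PiM$ is a self-adjoint operator that vanishes on $M$ and maps $M^\perp$ into itself; hence $\|S-\PiM\|$ equals the operator norm of $S-\PiM$ restricted to $M^\perp$, which by the spectral theorem equals $\sup\{\langle (S-\PiM)f,f\rangle : f\in M^\perp,\ \|f\|=1\}$ (positivity of $S$, cf. \cite{liuwongkong}, lets us drop the absolute value). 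Expanding $\langle Sf,f\rangle=\frac1d\sum_i\|\PM_i f\|^2$ and writing $f_i:=\PM_i f\in M_i\cap M^\perp$, one relates this supremum to the Rayleigh-type quotient in~\eqref{c:def}; this is exactly the computation in \cite[Proposition~3.7]{badea2012rate} (their configuration-constant formula), and rearranging the affine relation between the configuration constant and $c(M_1,\dots,M_d)$ from \cite[Proposition~3.6]{badea2012rate} yields the stated identity $\|S-\PiM\|=\frac{d-1}{d}\bigl[c(M_1,\dots,M_d)+\frac1{d-1}\bigr]$. The second inequality~\eqref{eq:inclination} is the content of \cite[Proposition~3.9]{badea2012rate}, which bounds the Friedrichs angle above and below by elementary functions of the inclination $\ell(M_1,\dots,M_d)$; the lower bound on $c$ comes from a Cauchy--Schwarz/triangle-inequality estimate on $\sum_{i\neq j}\langle f_i,f_j\rangle$ in terms of $\max_i\dist(f,M_i)$, and the upper bound from testing the definition of $\ell$ against a near-extremal $f$ for $c$.

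The main obstacle is purely bookkeeping: making sure the normalization conventions in \cite{badea2012rate} (configuration constant vs.\ Friedrichs angle, and whether distances are measured to $M_i$ or to $M_i\cap M^\perp$) line up with~\eqref{c:def} and~\eqref{l-def}, and that the reduction to $M^\perp$ is legitimate because all the relevant operators commute with $\PiM$. Once the dictionary is fixed, both~\eqref{eq:norm_random_scan} and~\eqref{eq:inclination} follow directly from the cited propositions with no further work; in particular we do not reprove the estimates of \cite{badea2012rate} here, we only instantiate them.
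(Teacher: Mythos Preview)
Your proposal is correct and matches the paper's own treatment: the proposition is stated there as an auxiliary fact ``resulting directly from \cite[Propositions~3.7 and~3.9]{badea2012rate}'' with no further proof, so instantiating those results in $H=L^2(\pi)$ with the dictionary you describe is exactly what is intended. Your additional sketch of the Rayleigh-quotient computation and the bookkeeping about normalizations goes slightly beyond what the paper records, but it is accurate and does not deviate in approach.
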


\section{Spectral gaps of Gibbs Samplers}\label{sec:gaps}

We establish precise sufficient conditions for the existence of the spectral gap for $\PM_{\rm DSG}$ and $\PM_{\rm RSG}$.

\begin{thm}\label{theo:equiv} For any stationary distribution $\pi$ on a Polish space $\stany = \stany_1 \times \dots \times \stany_d$ the following conditions are equivalent.
\begin{enumerate}[{\it (i)}]
 \item 
There exist positive weights $w_1,\dots,w_d$ as in~\eqref{weights} such that the random scan Gibbs sampler $\PM_{\rm RSG} =\sum_{i=1}^d w_i \PM_i,$ has an $L^2(\pi)$ spectral gap.
 \item 
For every choice of weights  $w_1,\dots,w_d$ satisfying~\eqref{weights} the  random scan Gibbs sampler $\PM_{\rm RSG} =\sum_{i=1}^d w_i \PM_i$ has an $L^2(\pi)$ spectral gap.
 \item 
There exists a permutation $\sigma$ such that the deterministic scan Gibbs sampler $\PM_{\rm DSG} =\PM_{\sigma(1)}\PM_{\sigma(2)}\cdots \PM_{\sigma(d)}$ has an $L^2(\pi)$ spectral gap.
 \item 
For every permutation $\sigma$ the deterministic scan Gibbs sampler $\PM_{\rm DSG} =\PM_{\sigma(1)}\PM_{\sigma(2)}\cdots \PM_{\sigma(d)}$ has an $L^2(\pi)$ spectral gap.
 \item 
There exists a permutation $\sigma$ such that $\left\Vert \PM_{\sigma(1)}\PM_{\sigma(2)}\cdots \PM_{\sigma(d)}-\PiM\right\Vert<1$.
 \item 
For every permutation $\sigma$ it holds that $\left\Vert \PM_{\sigma(1)}\PM_{\sigma(2)}\cdots \PM_{\sigma(d)}-\PiM\right\Vert<1$. 
 \end{enumerate}
\end{thm}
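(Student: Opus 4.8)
The plan is to prove the chain of equivalences by establishing a few key links and then closing the cycle, exploiting the symmetry of the statement under permutations of the coordinates. First, note that $(i)\Leftrightarrow(ii)$ and $(iii)\Leftrightarrow(iv)$ and $(v)\Leftrightarrow(vi)$ are each of the form ``exists $\Leftrightarrow$ for all'', so it suffices to observe that the relevant property is invariant under relabelling the coordinates: if one permutation (resp.\ one weight vector) works, so does any other, because relabelling the factor spaces $\stany_1,\dots,\stany_d$ is an isometry of $L^2(\pi)$ conjugating one Gibbs sampler into another. For the weights in $\PM_{\rm RSG}$ this is more delicate since different weight vectors are not conjugate; here I would instead argue directly from Proposition~\ref{prop:3.7badea}: by~\eqref{eq:norm_random_scan} (for uniform weights) together with a standard comparison, the quantity $\|\PM_{\rm RSG}-\PiM\|<1$ is controlled by the generalized Friedrichs angle $c(M_1,\dots,M_d)$, which does not depend on the weights, so $c(M_1,\dots,M_d)<1$ is equivalent to the spectral gap of $\PM_{\rm RSG}$ for \emph{any} admissible weights. (This is precisely the point where the geometric reformulation pays off: the weight-dependence is washed out.) So it remains to link the ``RSG side'', the ``DSG side'', and the ``norm $<1$ for DSG'' side.

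Next I would prove $(v)\Rightarrow(iii)$, which is immediate from~\eqref{MCSG}: if $\|\PM_{\rm DSG}-\PiM\|<1$ then in particular $\|\PM_{\rm DSG}^{n_0}-\PiM\|<1$ with $n_0=1$, so $\PM_{\rm DSG}$ has a spectral gap. The reverse $(iii)\Rightarrow(v)$ is the heart of the matter and cannot be obtained from Lemma~\ref{lem:nnorm} alone (that only gives $\|\PM_{\rm DSG}^n-\PiM\|\le\|\PM_{\rm DSG}-\PiM\|^n$, the wrong direction): one really needs that for the \emph{cyclic product of orthogonal projections} a spectral gap forces the operator norm itself to be strictly less than $1$. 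I would route this through the geometry: by Lemma~\ref{lem:4.2} the ``defect'' $\|f-\PiM f\|^2-\|\PM_{\rm DSG}f-\PiM f\|^2$ dominates each one-step increment $\|\PM_{j-1}\cdots\PM_1 f-\PM_j\cdots\PM_1 f\|^2$, and summing these telescoping differences controls $\|f-\PM_{\rm DSG}f\|$; combined with the von Neumann--Halperin-type estimates of \cite{badea2012rate} this yields a bound of the form $\|\PM_{\rm DSG}-\PiM\|\le F\big(c(M_1,\dots,M_d)\big)$ with $F(c)<1$ whenever $c<1$. Thus $c(M_1,\dots,M_d)<1$ implies $(vi)$ (hence $(v)$), and conversely $(iii)$ (spectral gap of some $\PM_{\rm DSG}$) implies $c(M_1,\dots,M_d)<1$: if $c(M_1,\dots,M_d)=1$ then by~\eqref{eq:inclination} the inclination $\ell(M_1,\dots,M_d)=0$, so there are near-common directions making each $\PM_i$ nearly fix a non-constant function, which via Lemma~\ref{lem:4.2} forces $\|\PM_{\rm DSG}^n-\PiM\|\to 1$, contradicting the spectral gap. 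This is the step I expect to be the main obstacle: quantitatively converting $\ell(M_1,\dots,M_d)=0$ (or $c=1$) into the \emph{failure of a spectral gap for every power} of $\PM_{\rm DSG}$, as opposed to just failure of $\|\PM_{\rm DSG}-\PiM\|<1$, and doing so uniformly over all permutations.

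Finally, I would assemble the cycle. From the two geometric characterizations above we get: $(ii)\Leftrightarrow c(M_1,\dots,M_d)<1$ (via~\eqref{eq:norm_random_scan} and~\eqref{eq:inclination}, using that the uniform-weight norm is $<1$ iff $c<1$), and $c(M_1,\dots,M_d)<1\Leftrightarrow(vi)\Leftrightarrow(v)$, and $(v)\Leftrightarrow(iii)\Leftrightarrow(iv)$; combined with $(i)\Leftrightarrow(ii)$ this closes all six conditions into one equivalence class. The only arrows requiring real work are the two directions between ``spectral gap'' and ``$c<1$'': the easy direction ($c<1\Rightarrow$ gap) uses Lemma~\ref{lem:4.2} plus \cite{badea2012rate}, and the hard direction (gap $\Rightarrow c<1$, equivalently: $c=1\Rightarrow$ no spectral gap for any power) is where, I anticipate, one must carefully build the approximate eigenfunctions. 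Everything else is bookkeeping with isometries and the definitions in Section~\ref{sec:prelim}.
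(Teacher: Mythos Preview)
Your opening symmetry claim is wrong: relabelling the coordinate spaces does \emph{not} in general yield an isometry of $L^2(\pi)$, because $\pi$ is not assumed exchangeable. There is thus no unitary $U$ with $U\PM_iU^*=\PM_{\sigma(i)}$, and you cannot conclude $(iii)\Leftrightarrow(iv)$ or $(v)\Leftrightarrow(vi)$ this way. Fortunately your final assembly does not actually depend on this: once you route everything through the condition $c(M_1,\dots,M_d)<1$, which is manifestly permutation-invariant, Lemma~\ref{lem:norm_determ} (your ``$F(c)<1$ whenever $c<1$'') delivers $(vi)$ directly, and $(vi)\Rightarrow(iv)$ is immediate from~\eqref{MCSG}. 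So the flaw is cosmetic, but you should delete the conjugation paragraph.

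On the substantive point, the paper does \emph{not} attempt the step you flag as the main obstacle. Its loop is $(i)\Rightarrow(ii)$ by Lemma~\ref{lem:weighst} (this is your ``standard comparison''; note that Proposition~\ref{prop:3.7badea} alone only treats uniform weights), then $(i)\Rightarrow(vi)$ via \eqref{eq:norm_random_scan}, \eqref{eq:inclination} and Lemma~\ref{lem:norm_determ}, then $(vi)\Rightarrow(iv)\Rightarrow(iii)$ trivially, and finally $(iii)\Rightarrow(i)$ by simply citing \cite[Proposition~3.2]{roberts1997geometric}. Your proposed direct argument for $(iii)\Rightarrow c<1$ is a genuine alternative and in fact works cleanly, so it is not the obstacle you anticipate: if $c=1$ then $\ell=0$ by~\eqref{eq:inclination}, so for each $\eps>0$ there is $f$ with $\dist(f,M)=1$ and $\|f-\PM_if\|<\eps$ for all $i$; iterating, $\|f-\PM_{\rm DSG}^nf\|<nd\eps$, whence $\|\PM_{\rm DSG}^nf-\PiM f\|\geq 1-nd\eps$, and since $\eps$ is arbitrary $\|\PM_{\rm DSG}^n-\PiM\|=1$ for every $n$, contradicting the spectral gap. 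This gives a self-contained replacement for the Roberts--Rosenthal citation, at the cost of a few extra lines; the paper's route is shorter but outsources the key implication.
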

Conditions {\it (v)} and {\it (vi)} are in general stronger than {\it (iii)} and {\it (iv)}, respectively. However, we show that the reverse implication is also true.
 In addition, since $\|T^*T\|=\|T^2\|$ for any operator $T$, conditions {\it (v)}, {\it (vi)} are equivalent to  {\[\left\Vert \PM_{\sigma(1)}\PM_{\sigma(2)}\cdots \PM_{\sigma(d)}\PM_{\sigma(d-1)}\cdots\PM_{\sigma(2)}\PM_{\sigma(1)}-\PiM\right\Vert<1\] for some permutation $\sigma$.}

\smallskip

Let us note that conditions {\it (v)} and {\it (vi)} are used to obtain Hoeffding's inequality for Markov chains \citep{miasojedowhoeffding}, see Theorem \ref{thm:Hoeffding} in Section \ref{sec:CTG}. 
\newline

In the proof of Theorem~\ref{theo:equiv} we employ the following lemmas.

\begin{lemma}\label{lem:norm}
For any Markov chain with transition operator $\PM$ and stationary distribution $\pi$ we have
\[\Vert \PM -\PiM\Vert\leq 1\,.\]
 
\end{lemma}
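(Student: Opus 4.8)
The plan is to exploit two elementary facts: that $\PM$ is a contraction on $L^2(\pi)$, and that $\PiM$ acts as a two-sided multiplicative identity against $\PM$. First I would record the contraction property: for $f\in L^2(\pi)$, Jensen's inequality applied to the probability kernel $P(x,\cdot)$ gives the pointwise bound $(\PM f(x))^2 \le \PM(f^2)(x)$, and integrating against the stationary measure $\pi$ yields $\pi\big((\PM f)^2\big)\le \pi\big(\PM(f^2)\big)=\pi(f^2)$, where the last equality is precisely stationarity $\pi\PM=\pi$ applied to the nonnegative function $f^2$. Hence $\Vert \PM f\Vert\le \Vert f\Vert$, i.e. $\Vert \PM\Vert\le 1$.

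Next I would use the identities $\PM\PiM=\PiM\PM=\PiM$ and $\PiM^2=\PiM$, already noted in the paragraph preceding Lemma~\ref{lem:nnorm} (the first from $\PM$ fixing constants, the second from stationarity). From $\PM\PiM=\PiM$ one gets the factorization
\[
\PM-\PiM \;=\; \PM-\PM\PiM \;=\; \PM(I-\PiM).
\]
Since $\PiM$ is the orthogonal projection onto the closed subspace $M$ of constant functions, $I-\PiM$ is the orthogonal projection onto $M^{\bot}$, so $\Vert I-\PiM\Vert\le 1$. Submultiplicativity of the operator norm then closes the argument:
\[
\Vert \PM-\PiM\Vert \;=\; \Vert \PM(I-\PiM)\Vert \;\le\; \Vert \PM\Vert\,\Vert I-\PiM\Vert \;\le\; 1 .
\]

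I do not expect any real obstacle: the only point requiring a touch of care is the contraction estimate $\Vert\PM\Vert\le 1$, which genuinely uses stationarity of $\pi$ (so that $\pi\PM=\pi$ on nonnegative functions), not merely that $\pi$ is invariant as a sub-probability. An equivalent route, which I would mention as an alternative, avoids the factorization: decompose $L^2(\pi)=M\oplus M^{\bot}$ orthogonally; both summands are $\PM$-invariant ($\PM$ maps constants to constants, and $M^{\bot}=\{f:\pi f=0\}$ is preserved because $\pi(\PM f)=\pi(f)=0$), the operator $\PM-\PiM$ vanishes on $M$ and coincides with $\PM$ on $M^{\bot}$, whence $\Vert\PM-\PiM\Vert=\Vert\PM|_{M^{\bot}}\Vert\le\Vert\PM\Vert\le 1$.
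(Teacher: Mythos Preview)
Your proof is correct and follows essentially the same approach as the paper: both arguments reduce to showing that $\PM$ is an $L^2(\pi)$-contraction via Jensen's inequality and stationarity, and both use that $(\PM-\PiM)f=\PM(f-\pi f)$. The only cosmetic difference is that the paper works directly with the centered function $f_0=f-\pi f$ and bounds $\Vert\PM f_0\Vert$ in one line, whereas you first record $\Vert\PM\Vert\le 1$ and then apply the operator factorization $\PM-\PiM=\PM(I-\PiM)$; the substance is identical.
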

\begin{proof}
 Let us denote $f_0=f-\pi f$. Simple calculations shows that $\Vert f_0\Vert\leq \Vert f\Vert $. Indeed,
 \[\Vert f_0\Vert=\pi f_0^2=\pi((f-\pi(f))^2)=\pi(f^2)-(\pi(f))^2\leq\pi(f^2)=\Vert f\Vert\,.\]
 For any $f\in L^2(\pi)$ by Jensen inequality and by the definition of stationary distribution we have
 \[\left\| \PM f-\PiM f\right\|=\left\Vert \PM f_0\right\Vert=\int_\stany (\PM f_0(x))^2\pi(x) dx\leq\int_\stany \PM f_0^2(x)\pi(x) dx= \pi(f_0^2)\leq \Vert f\Vert\,. \]
\end{proof}
 {The next lemma can be also deduced from \cite[Proposition 2]{JRR}, but we give a simple proof for the sake of completeness.}
\begin{lemma}\label{lem:weighst}If there exist $w_1,\dots,w_d$ satisfying~\eqref{weights} and such that 
\[\left\Vert \sum_{i=1}^d w_i \PM_i -\PiM\right\Vert<1\,,\]
then for all weights $\tilde{w}_1,\dots,\tilde{w}_d$ satisfying~\eqref{weights}, we have
\[\left\Vert \sum_{i=1}^d \tilde w_i \PM_i -\PiM\right\Vert<1\,.\]
\end{lemma}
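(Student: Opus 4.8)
The plan is to reduce everything to the operator $\sum_i w_i\PM_i - \PiM$ acting on the orthogonal complement $M^\perp$, and to exploit that each $\PM_i$ is an orthogonal projection with $\PM_i\PiM = \PiM\PM_i = \PiM$, so that on $M^\perp$ the operator $\PM_i - \PiM$ is again an orthogonal projection (onto $M_i \cap M^\perp$) of norm at most $1$. The key quantitative handle will be Proposition~\ref{prop:3.7badea}: the hypothesis $\Vert \sum_i w_i\PM_i - \PiM\Vert < 1$ should be translated into a statement saying that the generalized Friedrichs angle $c(M_1,\dots,M_d)$ is strictly less than $1$, and then the conclusion for the new weights $\tilde w_i$ should follow by translating back. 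So the heart of the matter is: the condition $\Vert\sum_i w_i\PM_i - \PiM\Vert<1$ does not depend on the particular positive weights, only on the geometric configuration of the subspaces $M_1,\dots,M_d$.

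First I would fix $f \in M^\perp$ with $\Vert f\Vert = 1$ and write $f_i := (\PM_i - \PiM)f = \PM_i f - \PiM f$, noting $f_i \in M_i \cap M^\perp$ and $\langle f, f_i\rangle = \langle \PM_i f, f\rangle = \Vert \PM_i f\Vert^2 = \langle f_i, f_i\rangle$ since $\PM_i$ is an orthogonal projection (and $\PiM f = 0$). Expanding $\Vert \sum_i w_i(\PM_i-\PiM)f\Vert^2 = \sum_{i,j} w_i w_j \langle f_i, f_j\rangle$, the hypothesis gives that this is at most $\varrho^2 \Vert f\Vert^2$ for some $\varrho < 1$, uniformly over such $f$. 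The goal is to deduce the same bound with $\tilde w_i$ in place of $w_i$. The natural route is to bound the quadratic form $\sum_{i,j}\tilde w_i\tilde w_j\langle f_i,f_j\rangle$ in terms of $\sum_i \langle f_i,f_i\rangle$ using the definition of $c(M_1,\dots,M_d)$: since each $f_i \in M_i$ with $\pi f_i = 0$, we have $\sum_{i\neq j}\langle f_i,f_j\rangle \leq (d-1)\,c(M_1,\dots,M_d)\sum_i\langle f_i,f_i\rangle$, and a symmetric lower bound with $-c$ and the weight $1/(d-1)$ adjustment. Combined with $\langle f_i,f_i\rangle = \langle f,f_i\rangle \leq \Vert f_i\Vert$ hence $\Vert f_i\Vert \leq 1$, one controls $\sum_i\langle f_i,f_i\rangle \leq$ something and closes the loop — provided one first knows $c(M_1,\dots,M_d) < 1$, which is exactly what the hypothesis on the $w_i$ delivers via~\eqref{eq:norm_random_scan} after symmetrizing the weighted operator, or more simply via the equivalence between a weighted random scan having a gap and the unweighted one (take $w_i = 1/d$ as an intermediate step using~\eqref{eq:norm_random_scan} directly).

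Concretely the cleanest argument is a two-step bridge: (a) show that $\Vert\sum_i w_i\PM_i - \PiM\Vert < 1$ for \emph{some} positive weights implies $\Vert\frac1d\sum_i\PM_i - \PiM\Vert < 1$, and (b) conversely that the latter implies the bound for \emph{every} positive weight vector. For (a), if $c := c(M_1,\dots,M_d) = 1$ then by~\eqref{c:def} there are near-extremal $f_i \in M_i\cap M^\perp$ making $\sum_{i\neq j}\langle f_i,f_j\rangle$ close to $(d-1)\sum_i\Vert f_i\Vert^2$; one must check that such a configuration forces the weighted sum to have norm near $1$ as well, for any fixed positive weights. Here a convexity or Cauchy--Schwarz estimate relating $\sum_{i\neq j} w_i w_j\langle f_i,f_j\rangle$ to $\sum_{i\neq j}\langle f_i,f_j\rangle$ is needed; the point is that the ``worst case'' $f_i$'s can be taken essentially aligned, $f_i \approx f$ for all $i$, in which case $\Vert\sum_i w_i\PM_i f - \PiM f\Vert \approx \Vert\sum_i w_i f\Vert = \Vert f\Vert$ regardless of the weights. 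For (b), with $c < 1$ fixed, expand $\Vert\sum_i\tilde w_i(\PM_i-\PiM)f\Vert^2 = \sum_i\tilde w_i^2\Vert f_i\Vert^2 + \sum_{i\neq j}\tilde w_i\tilde w_j\langle f_i,f_j\rangle$ and bound the cross terms crudely by $c\cdot(\max_i\tilde w_i)\sum_{i,j}$-type expressions, or better, use~\eqref{eq:norm_random_scan} with a rescaling trick. I expect the main obstacle to be handling general (possibly very unequal) weights $\tilde w_i$: the quadratic form $\sum \tilde w_i\tilde w_j\langle f_i,f_j\rangle$ is not directly governed by the symmetric average, so the estimate must genuinely go through the configuration constant / Friedrichs angle $c(M_1,\dots,M_d)$ — whose finiteness-below-$1$ is weight-independent — rather than through a direct comparison of the two weighted operators. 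Once the reduction ``gap $\iff c<1$'' is established cleanly using Proposition~\ref{prop:3.7badea}, the lemma is immediate since $c(M_1,\dots,M_d)$ makes no reference to any weights.
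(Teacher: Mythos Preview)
Your route through the Friedrichs angle is a detour, and the paper's proof is much more elementary: it never touches $c(M_1,\dots,M_d)$ or Proposition~\ref{prop:3.7badea}. Instead it writes the new weights as a convex combination of the old ones and a residual weight vector. Set $\alpha=\min_i \tilde w_i/w_i\in(0,1]$ and $\hat w_i=(\tilde w_i-\alpha w_i)/(1-\alpha)$; then $\hat w_i\geq 0$, $\sum_i\hat w_i=1$, and
\[
\sum_i\tilde w_i\PM_i-\PiM=\alpha\Bigl(\sum_i w_i\PM_i-\PiM\Bigr)+(1-\alpha)\Bigl(\sum_i\hat w_i\PM_i-\PiM\Bigr).
\]
The triangle inequality together with Lemma~\ref{lem:norm} (which gives $\Vert\sum_i\hat w_i\PM_i-\PiM\Vert\leq 1$) finishes in one line. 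So your concluding remark that ``the estimate must genuinely go through the configuration constant\dots rather than through a direct comparison of the two weighted operators'' is exactly backwards: the direct comparison works and is the whole proof.

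Your sketch also has a real gap in step~(a). The near-extremal $f_i\in M_i\cap M^\perp$ witnessing $c$ close to $1$ are \emph{arbitrary} elements of those subspaces; to bound $\Vert\sum_i w_i\PM_i-\PiM\Vert$ from below you need a single $f$ with $\PM_i f$ close to $f$ for each $i$. Your claim that the worst-case $f_i$'s ``can be taken essentially aligned, $f_i\approx f$'' is the crux and is not justified: it requires showing that near-equality in $\sum_{i\neq j}\langle f_i,f_j\rangle\leq(d-1)\sum_i\Vert f_i\Vert^2$ forces the $f_i$ to be nearly parallel \emph{and} that the common direction $f$ then satisfies $\PM_i f\approx f$. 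This can be done, but it is work you have not supplied, and it is entirely avoidable given the two-line convexity argument above.
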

\begin{proof}For any arbitrarily chosen $\tilde{w}_1,\dots,\tilde{w}_d$, 
 let $\alpha=\min_{1\leq i \leq d} \frac{\tilde w_i}{w_i}$. Obviously  $\alpha\leq 1$. For $i=1,\dots,d$ we define $\hat w_i = \frac{\tilde w_i -\alpha w_i}{1-\alpha}$. By definition
 $\hat w_i\geq0$ and $\sum_{i=1}^d \hat w_i=1$. Let $\hat \PM = \sum_{i=1}^d \hat w_i \PM_i$. Clearly, $\hat \PM$ is a Markov transition operator with stationary distribution $\pi$.
 We have
 \[\left\| \sum_{i=1}^d \tilde w_i \PM_i -\PiM\right\|=\left\| \alpha \left[\sum_{i=1}^d w_i \PM_i-\PiM\right]+(1-\alpha) \left[\hat \PM-\PiM\right]\right\|\,.\]
 Applying the triangle inequality we obtain
 \begin{equation}\label{eq:triangle}
 \left\| \sum_{i=1}^d \tilde w_i \PM_i -\PiM\right\|\leq \alpha \left\|\sum_{i=1}^d w_i \PM_i-\PiM\right\|+(1-\alpha) \left\|\hat\PM-\PiM\right\|\,. 
 \end{equation}

 Finally, by the assumption and by Lemma~\ref{lem:norm} we get
  \[\left\| \sum_{i=1}^d \tilde w_i \PM_i -\PiM\right\|<\alpha+(1-\alpha)=1\,.\]
\end{proof}

\begin{lemma}\label{lem:norm_determ} Let $d\geq 2$. Suppose $M_1,\dots,M_d$ are closed subspaces of $L^2(\pi)$ with intersection $M=\bigcap_{i=1}^d M_d$. Assume further that $\PM_i$ for every $i$ is an orthogonal projection on $M_i$ and $\PiM$ is an orthogonal projection on $M$. Then for every permutation $\sigma$ we have
\begin{equation}\label{eq:norm_deterministic}
\left\Vert \PM_{\sigma(1)}\PM_{\sigma(2)}\cdots \PM_{\sigma(d)}-\PiM\right\Vert<\sqrt{1-\frac{ {\ell^2} }{d^2}}\,,
\end{equation}
 {where $\ell:=\ell(M_1,\dots,M_d)$ is defined in~\eqref{l-def}}.
\end{lemma}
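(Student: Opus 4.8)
The plan is to establish the quantitative norm bound \eqref{eq:norm_deterministic} by relating the deterministic scan composition to the inclination $\ell$ via the telescoping identity of Lemma~\ref{lem:4.2}. First I would fix a permutation $\sigma$; since relabelling the coordinates only permutes the spaces $M_i$, it suffices to treat $\sigma = \mathrm{id}$, so that $\PM_{\rm DSG} = \PM_d \cdots \PM_1$. Take $f \in L^2(\pi)$ with $\|f - \PiM f\| = 1$, i.e. $\dist(f, M) = 1$. By the definition \eqref{l-def} of the inclination, there is an index $i$ (depending on $f$) with $\dist(f, M_i) = \|f - \PM_i f\| \ge \ell$. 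Then I would compare $\|f - \PM_i f\|$ with the successive increments $\|\PM_{j-1}\cdots\PM_1 f - \PM_j \cdots \PM_1 f\|$ appearing in Lemma~\ref{lem:4.2}: the point is that the partial composition $\PM_{i-1}\cdots\PM_1 f$ need not be close to $f$, so one cannot directly read off a lower bound on a single increment from $\dist(f,M_i)$. The natural fix is to bound $\dist(f, M_i) = \|f - \PM_i f\|$ by a telescoping sum of the first $i$ increments (using that $\PM_i$ is an orthogonal projection, hence $\|f - \PM_i f\| \le \|f - \PM_i(\PM_{i-1}\cdots\PM_1 f)\| + \|\PM_i\|\cdot\|\PM_{i-1}\cdots \PM_1 f - f\|$, and then iterating), giving $\ell \le \dist(f,M_i) \le \sum_{j=1}^{d}\|\PM_{j-1}\cdots\PM_1 f - \PM_j\cdots\PM_1 f\|$ after also including the term $\|\PM_i\cdots\PM_1 f - \PM_{i}\cdots\PM_1 f\|$-type telescoping up to step $i$; being slightly generous one sums all $d$ increments.

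Next, apply Cauchy--Schwarz to this sum of $d$ terms: $\ell^2 \le d \sum_{j=1}^d \|\PM_{j-1}\cdots\PM_1 f - \PM_j\cdots\PM_1 f\|^2$. Now invoke Lemma~\ref{lem:4.2}, which bounds \emph{each} increment by $\|f - \PiM f\|^2 - \|\PM_{\rm DSG}f - \PiM f\|^2 = 1 - \|\PM_{\rm DSG}f - \PiM f\|^2$; however, summing that crude per-term bound over $j$ would only give $\ell^2 \le d^2(1 - \|\PM_{\rm DSG}f\|^2)$, which is exactly the shape we want. So actually the cleaner route is: $\ell^2 \le d \sum_{j=1}^d \|\cdots\|^2 \le d \cdot d \cdot (1 - \|\PM_{\rm DSG}f - \PiM f\|^2) = d^2(1 - \|\PM_{\rm DSG} f - \PiM f\|^2)$, hence $\|\PM_{\rm DSG} f - \PiM f\|^2 \le 1 - \ell^2/d^2$ for every such $f$. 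Taking the supremum over $f$ with $\|f - \PiM f\| = 1$ and noting $\PM_{\rm DSG}\PiM = \PiM\PM_{\rm DSG} = \PiM$ (so $(\PM_{\rm DSG} - \PiM)$ vanishes on constants and it suffices to test on the orthocomplement of $M$) yields $\|\PM_{\rm DSG} - \PiM\|^2 \le 1 - \ell^2/d^2$. To get the strict inequality claimed, I would observe that $\ell > 0$ always holds here because $M = \bigcap M_i$ is exactly the intersection, so for $f \perp M$ with $\|f\|=1$ one cannot have $\dist(f, M_i) = 0$ for all $i$; a compactness-free argument for $\ell>0$ is not automatic, but note the statement only needs $\|\PM_{\rm DSG}-\PiM\| < 1$ to be meaningful, and $\sqrt{1 - \ell^2/d^2} < 1$ precisely when $\ell > 0$ — so if $\ell = 0$ the bound \eqref{eq:norm_deterministic} would be vacuous/false and the intended reading is that it is asserted in the regime where the relevant spectral gap quantities are positive; I would check the exact phrasing, but most likely the lemma is applied downstream only with the a priori knowledge (from one of conditions (i)--(iv) of Theorem~\ref{theo:equiv}) that $\|\PM_{\rm RSG} - \PiM\| < 1$, which via \eqref{eq:norm_random_scan} and \eqref{eq:inclination} forces $\ell > 0$.

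The main obstacle I anticipate is the first step: converting the single-index lower bound $\dist(f, M_i) \ge \ell$ into a bound in terms of the consecutive increments $\|\PM_{j-1}\cdots\PM_1 f - \PM_j\cdots\PM_1 f\|$ that Lemma~\ref{lem:4.2} controls. One must be careful that $\PM_i$ is applied to $f$ itself in the definition of $\dist(f, M_i)$, not to the partially-updated vector, so the telescoping has to absorb the discrepancy $\|f - \PM_{i-1}\cdots\PM_1 f\|$, and one needs that $\PM_i$ is a contraction (being an orthogonal projection, norm $\le 1$) to fold this in without losing constants beyond the factor $d$. Getting the bookkeeping right so that exactly $d$ increments appear — and hence the factor $d^2$ rather than something larger — is the delicate part; everything after that (Cauchy--Schwarz, plugging in Lemma~\ref{lem:4.2}, taking the supremum, reducing to $M^\perp$) is routine.
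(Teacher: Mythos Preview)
Your approach is essentially the paper's: bound $\dist(f,M_j)=\|f-\PM_j f\|$ by a telescoping sum of the increments $\|\PM_{k-1}\cdots\PM_1 f-\PM_k\cdots\PM_1 f\|$, apply Cauchy--Schwarz for one factor of $d$, and then Lemma~\ref{lem:4.2} termwise for the second factor, arriving at $\ell^2\|f-\PiM f\|^2\le d^2\bigl(\|f-\PiM f\|^2-\|\PM_{\rm DSG}f-\PiM f\|^2\bigr)$. The paper's first step is cleaner than your triangle-inequality detour: since $\PM_j\cdots\PM_1 f\in M_j$ and $\PM_j f$ is the nearest point of $M_j$ to $f$, one has directly $\|f-\PM_j f\|\le\|f-\PM_j\cdots\PM_1 f\|$, which gives exactly the $j\le d$ increments (your route via $\|\PM_i\|\le 1$ as written would pick up an extra factor of $2$). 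Your observation on strictness is correct: the paper's own computation likewise yields only $\le$, and in the applications (proof of Theorem~\ref{theo:equiv}, Corollary~\ref{cor:bounds}) only $\le$ together with $\ell>0$---established separately via \eqref{eq:norm_random_scan} and \eqref{eq:inclination}---is actually used.
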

\begin{proof}We follow the ideas of the proof of \cite[Theorem~4.1]{badea2012rate}. By Lemma~\ref{lem:4.2} for every $j\in [1,d]$ we can estimate
\begin{align*}
\left(\dist(f, M_j )\right)^2 &=\|f-\PM_j f\|^2\leq \|f - \PM_j\dots \PM_1 f\|^2\\
&\leq (\|f - \PM_1 f\| + \|\PM_1 f - \PM_2\PM_1f\|+\cdots+ \|\PM_{j-1}\dots \PM_1 f - \PM_j\dots \PM_1 f\|)^2\\
&\leq j\left(\|f - \PM_1 f\|^2 + \|\PM_1 f - \PM_2\PM_1f\|^2+\cdots+ \|\PM_{j-1}\dots \PM_1 f - \PM_j\dots \PM_1 f\|^2\right)\\
&\leq  j^2 \left(\|f - \PiM f\|^2 - \|\PM_{\rm DSG}f - \PiM f\|^2\right)\\
&\leq  d^2\left(\|f - \PiM f\|^2 - \|\PM_{\rm DSG}f - \PiM f\|^2\right).
\end{align*}
 {Therefore}, since $\|f - \PiM f\|^2=\dist (f,M)$ by the very definition of $\ell$ we see that
\begin{align*}
 \ell ^2 \|f - \PiM f\|^2&\leq \left(\max_{j\in [1,d]}\dist(f, M_j )\right)^2 \leq d^2(\|f - \PiM f\|^2 - \|\PM_{\rm DSG}f - \PiM f\|^2).
\end{align*}
Consequently, taking into account that $\|f-\PiM f\|^2\leq \|f\|^2$, we get
\[\|\PM_{\rm DSG}f - \PiM f\|^2 \leq \left(1- \frac{ \ell ^2}{d^2}\right)\|f-\PiM f\|^2 \leq \left(1- \frac{ \ell ^2}{d^2}\right)\|f\|^2 \,. \]
Above, the order of projections does not matter, so we infer that for any permutation $\sigma$ of set $\{1,\dots,d\}$ we have~\eqref{eq:norm_deterministic}. 
\end{proof}

Let us prove the equivalence of sufficient conditions for the existence of the spectral gap for both types of Gibbs samplers.

\begin{proof}[Proof of Theorem~\ref{theo:equiv}] As the implications $(ii)\Rightarrow(i)$, $(iv)\Rightarrow(iii)$, and $(vi)\Rightarrow(v)$ are obvious, let us concentrate on the remaining ones. In this proof we denote $\ell:=\ell(M_1,\dots,M_d)$.
\begin{itemize}
    \item[$(i)\Rightarrow(ii)$] 
 For any weights $w_1,\dots,w_d$ random scan Gibbs sampler is reversible. Hence, the spectral gap property is equivalent to $\left\Vert \sum_{i=1}^d w_i \PM_i -\PiM\right\Vert<1$ and $(ii)$ results from Lemma~\ref{lem:weighst}.
 
    \item[$(i)\Rightarrow(vi)$] From $(i)$ and Lemma~\ref{lem:weighst} we get that  $\left\Vert \frac{1}{d} \sum_{i=1}^d \PM_i -\PiM\right\Vert<1$. On the other hand, the assumptions of Proposition~\ref{prop:3.7badea} are satisfied and we have~\eqref{eq:norm_random_scan}.  Therefore 
$c(M_1,\dots,M_d)<1$. Due to~\eqref{eq:inclination} we obtain that $\ell>0$. By Lemma~\ref{lem:norm_determ}  
we have \[
\left\Vert \PM_{\sigma(1)}\PM_{\sigma(2)}\cdots \PM_{\sigma(d)}-\PiM\right\Vert<\sqrt{1-\frac{\ell^2 }{d^2}}<1\,\]
and $(vi)$ follows.

    \item[$(vi)\Rightarrow(iv)$] We take any fixed permutation $\sigma$. Then it holds that \[\left\Vert \PM_{\sigma(1)}\PM_{\sigma(2)}\cdots \PM_{\sigma(d)}-\PiM\right\Vert<1\] and $(iv)$ results from~\eqref{MCSG}.
    
    \item[$(v)\Rightarrow(iii)$] It is a special case of 
$(vi)\Rightarrow(iv)$.

 \item[$(iii)\Rightarrow(i)$]  {Follows directly from \cite[Proposition 3.2]{roberts1997geometric}}%Without loss of generality we can assume that the permutation for which the spectral gap occurs is indexed by natural numbers and study  $\PM_{\rm DSG}=\PM_1\PM_2\cdots\PM_d$.  By the spectral radius formula~\eqref{MCSG} we know that there exists $n_o$, such that\begin{equation}
        %\label{P-DSG-gap}\|(\PM_1\PM_2\cdots\PM_d)^{n_o}-\PiM\|<1\,.
    %\end{equation}
    %Note that the operator $\left(\frac{1}{d}\sum_{i=1}^d \PM_i-\PiM\right)$ is self-adjoint as a sum of self-adjoint operators. Then, using Lemma~\ref{lem:nnorm} we have\begin{equation}\label{sum-norm-prop}\textstyle
     %   \left\|\frac{1}{d}\sum_{i=1}^d \PM_i-\PiM\right\|^{dn_o}=  \left\|\left(\frac{1}{d}\sum_{i=1}^d \PM_i-\PiM\right)^{dn_o}\right\|= \left\|\left(\frac{1}{d}\right)^{dn_o}\left(\sum_{i=1}^d \PM_i\right)^{dn_o}-\PiM\right\|\,.
    %\end{equation}
    %Note that there exists a Markov kernel $\mathsf{S}$ such that \[\textstyle \left(\frac{1}{d}\right)^{dn_o}\left(\sum_{i=1}^d \PM_i\right)^{dn_o}-\PiM=\left(\frac{1}{d}\right)^{dn_o}\big[(\PM_1\PM_2\cdots\PM_d)^{n_o}-\PiM\big]+\left[1-\left(\frac{1}{d}\right)^{dn_o}\right](\mathsf{S}-\PiM)\,.\]
    %Therefore, by~\eqref{sum-norm-prop}, the last display, and the triangle inequality we obtain
    %\[\textstyle
    %\left\|\frac{1}{d}\sum_{i=1}^d \PM_i-\PiM\right\|^{dn_o}\leq \left(\frac{1}{d}\right)^{dn_o}\left\|(\PM_1\PM_2\cdots\PM_d)^{n_o}-\PiM\right\|+\left[1-\left(\frac{1}{d}\right)^{dn_o}\right]<1\,,\]
    %where the last estimate results from~\eqref{P-DSG-gap}. Consequently, {\it (i)} holds true.}
\end{itemize} 
\end{proof}

We are now in a position to infer our main result.
\begin{proof}[Proof of Theorem~\ref{theo:main}]
We notice that the desired dichotomy is already shown in the proof of Theorem~\ref{theo:equiv}. In fact, if any of the Gibbs samplers has a positive $L^2(\pi)$ spectral gap, then property {\it (i)} or {\it (iii)} from Theorem~\ref{theo:equiv} holds true. Then, by {\it (ii)} and {\it (iv)}, all of the Gibbs samplers have a positive $L^2(\pi)$ spectral gap. Otherwise none of the Gibbs samplers has a positive $L^2(\pi)$ spectral gap.
\end{proof}

\begin{proof}[Proof of Corollary~\ref{cor:main}] For any  {$\pi$-irreducible} Markov chain the spectral gap property implies GE, \cite[Theorem~1.3]{kontoyiannis2012geometric}. Moreover,  for reversible chains geometric ergodicity is equivalent to spectral gap property, \cite[Theorem~2.1]{roberts1997geometric} or \cite[Proposition~1.2]{kontoyiannis2012geometric}.
 Since the random scan Gibbs sampler is reversible, in the view of equivalent conditions of Theorem~\ref{theo:equiv} we infer that all Gibbs samplers have positive spectral gaps. 
 \end{proof}

Using the proof of Theorem~\ref{theo:equiv} we can estimate the spectral gap of the   Gibbs samplers. Let us point out that the estimate for the random scan Gibbs sampler is sharp.
\begin{cor}\label{cor:bounds} { Let $c:=c(M_1,\dots,M_d)$ be given by~\eqref{c:def}.} The following estimates on the spectral gaps hold true. \begin{enumerate}[(i)]
    \item 
If $\alpha:=d\min_{1\leq i\leq d} w_i$, for the random scan Gibbs sampler $\PM_{\rm RSG} =\sum_{i=1}^d w_i \PM_i$ with any weights $w_1,\dots,w_d$ satisfying~\eqref{weights} the following bound holds true 
 \begin{equation}\label{cor1}
  \left\|\PM_{\rm RSG} -\PiM\right\| \leq \tfrac{d-1}{d}\alpha\left( {c} +\tfrac{1}{d-1}\right)+1-\alpha \,.
 \end{equation} 
Moreover,  equality holds if $ w_i=\frac{1}{d}$ for every $i=1,\dots, d$ and
\begin{equation}\label{cor1a}
\tfrac{1}{d}\leq  \left\|\PM_{\rm RSG} -\PiM\right\| \,.
 \end{equation}

\item For any permutation $\sigma$ of set $\{1,\dots,d\}$ for the deterministic scan Gibbs sampler $\PM_{\rm DSG} =\PM_{\sigma(1)}\PM_{\sigma(2)}\cdots \PM_{\sigma(d)}$ we have
 \begin{equation}
\label{cor2}
 \Vert \PM_{\rm DSG} -\PiM\Vert\leq\sqrt{1- {\frac{(d-1)^2}{4d^4}(1-c )^2}}\,.
 \end{equation}
\end{enumerate}
\end{cor}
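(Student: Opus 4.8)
The plan is to re-use, now quantitatively, exactly the ingredients already assembled for the proof of Theorem~\ref{theo:equiv}: the convex-splitting trick from the proof of Lemma~\ref{lem:weighst}, the two identities of Proposition~\ref{prop:3.7badea}, and the deterministic-scan estimate of Lemma~\ref{lem:norm_determ}. Throughout write $\ell := \ell(M_1,\dots,M_d)$ and recall $c := c(M_1,\dots,M_d)$, both of which satisfy $0\le c\le 1$ and $\ell\ge 0$ (the bound $c\le 1$ is Cauchy--Schwarz applied termwise in \eqref{c:def}, and $c\ge 0$ follows from \eqref{c:def} by testing with a single nonzero $f_i\in M_i\cap M^\perp$ and all other $f_j=0$, which makes the numerator vanish).

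For part \emph{(i)}, I would run the decomposition from the proof of Lemma~\ref{lem:weighst} with the uniform weights $v_i=1/d$ in the role of the ``base'' weights. Setting $\alpha=\min_i w_i/v_i=d\min_i w_i\in(0,1]$ and $\hat w_i=(w_i-\alpha v_i)/(1-\alpha)$ (a valid weight vector when $\alpha<1$), one gets $\PM_{\rm RSG}=\alpha\,\tfrac1d\sum_{i=1}^d\PM_i+(1-\alpha)\hat\PM$ with $\hat\PM:=\sum_i\hat w_i\PM_i$ a Markov transition operator with stationary distribution $\pi$. Subtracting $\PiM$, applying the triangle inequality, bounding the first term by $\tfrac{d-1}{d}(c+\tfrac1{d-1})$ via \eqref{eq:norm_random_scan} and the second by $1$ via Lemma~\ref{lem:norm}, yields exactly \eqref{cor1}. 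When $w_i=1/d$ we have $\alpha=1$, the $\hat\PM$-term disappears, and \eqref{cor1} collapses to the identity \eqref{eq:norm_random_scan} itself, which is precisely the asserted equality (hence sharpness). Finally \eqref{cor1a} follows from that same identity together with $c\ge 0$: indeed $\tfrac{d-1}{d}(c+\tfrac1{d-1})\ge\tfrac1d$.

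For part \emph{(ii)}, Lemma~\ref{lem:norm_determ} gives, for every permutation $\sigma$, the estimate $\|\PM_{\rm DSG}-\PiM\|\le\sqrt{1-\ell^2/d^2}$ (only the non-strict form of \eqref{eq:norm_deterministic} is needed here). It remains to lower-bound $\ell$ in terms of $1-c$, and for this I would use the \emph{left} inequality in \eqref{eq:inclination}, namely $1-\tfrac{2d}{d-1}\ell\le c$, which rearranges to $\ell\ge\tfrac{d-1}{2d}(1-c)$. Since $\ell\ge0$ and $1-c\ge0$, squaring is legitimate and gives $\ell^2/d^2\ge\tfrac{(d-1)^2}{4d^4}(1-c)^2$; plugging this into the previous display and using that $t\mapsto\sqrt{1-t}$ is decreasing produces \eqref{cor2}.

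There is no deep obstacle here: the proof is a matter of inserting the right lemma in the right place, mirroring the chain $(i)\Rightarrow(ii)$ and $(i)\Rightarrow(vi)$ of Theorem~\ref{theo:equiv}. The one point that requires a little care is the choice of estimate in part \emph{(ii)}: one must use the linear lower bound $\ell\gtrsim(1-c)$ coming from the \emph{first} inequality of \eqref{eq:inclination}, not the quadratic relation $\ell^2\lesssim(1-c)$ coming from the second, in order to land on the stated $(1-c)^2$ form. A minor caveat for \eqref{cor1a} is that it tacitly presumes a non-degenerate setting (some $M_i\ne M$, so that the constraint set in \eqref{c:def} is non-empty and $c$ is defined); under that standing assumption the argument above goes through verbatim.
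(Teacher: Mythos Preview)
Your proposal is correct and follows essentially the same route as the paper: the convex decomposition from Lemma~\ref{lem:weighst} (with the uniform weights as the base) combined with \eqref{eq:norm_random_scan} and Lemma~\ref{lem:norm} for part \emph{(i)}, and Lemma~\ref{lem:norm_determ} together with \eqref{eq:inclination} for part \emph{(ii)}. Your write-up is in fact more explicit than the paper's, which simply cites ``\eqref{eq:norm_deterministic} combined with \eqref{eq:inclination}'' for \eqref{cor2}; you correctly identify that it is the \emph{left} inequality of \eqref{eq:inclination} that is needed and check that the squaring step is legitimate.
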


\begin{proof}
 \begin{enumerate}[\it (i)]
    \item We notice that
 \begin{align*}
     \left\| \sum_{i=1}^d w_i \PM_i -\PiM\right\|&=\left\| \alpha \left[\sum_{i=1}^d \frac{1}{d} \PM_i-\PiM\right]+(1-\alpha) \left[\sum_{i=1}^d \frac{w_i-\frac{\alpha}{d}}{1-\alpha} \PM_i-\PiM\right]\right\|\\
&\leq \alpha \left\|\sum_{i=1}^d \frac{1}{d} \PM_i-\PiM\right\|+1-\alpha\,,
 \end{align*}
 where we used the triangle inequality and Lemma~\ref{lem:norm}. By applying~\eqref{eq:norm_random_scan} we obtain the desired inequality.
 
If $w_i=\frac 1d$  for every $i=1,\dots, d$, then \eqref{eq:norm_random_scan} implies that we deal with the equality in~\eqref{cor1}. To justify~\eqref{cor1a} it suffices to note that by the very definition  {$c=c(M_1,\dots,M_d)
\geq 0$.}

\item Inequality~\eqref{cor2} is a consequence of \eqref{eq:norm_deterministic} combined with \eqref{eq:inclination}. \end{enumerate}
\end{proof}

\begin{cor}\label{cor:poly}
 {
 In terms of the notion of rapid and torpid mixing \citep{torpid,rapid}, if a random scan mixes rapidly, so does any deterministic scan. Equivalently, if a deterministic scan mixes torpidly, so does any random scan.}
\end{cor}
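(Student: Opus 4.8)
The plan is to quantify ``rapid mixing'' via the polynomial decay of the spectral gap with the dimension $d$ and then transfer the estimate from $\PM_{\rm RSG}$ to $\PM_{\rm DSG}$ using Corollary~\ref{cor:bounds}. Concretely, suppose the spectral gap of $\PM_{\rm RSG}$ decays at worst polynomially, i.e.\ there is a constant $K>0$ and an exponent $\beta\geq 0$ such that $1-\|\PM_{\rm RSG}-\PiM\|\geq K d^{-\beta}$ for all $d$. By~\eqref{cor1a} with $w_i=1/d$ we have $\|\PM_{\rm RSG}-\PiM\|\geq 1/d$, hence $1-\|\PM_{\rm RSG}-\PiM\|\leq 1-1/d$; combining with~\eqref{eq:norm_random_scan} this says exactly that $c=c(M_1,\dots,M_d)\leq 1-\tfrac{d}{d-1}Kd^{-\beta}$, equivalently $1-c\geq \tfrac{d}{d-1}Kd^{-\beta}\geq K d^{-\beta}$.

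Next I would feed this lower bound on $1-c$ into the deterministic-scan estimate~\eqref{cor2}. Writing $x:=\tfrac{(d-1)^2}{4d^4}(1-c)^2$, Corollary~\ref{cor:bounds}{\it (ii)} gives $\|\PM_{\rm DSG}-\PiM\|\leq\sqrt{1-x}$, so $1-\|\PM_{\rm DSG}-\PiM\|\geq 1-\sqrt{1-x}\geq x/2$ (using $1-\sqrt{1-t}\geq t/2$ for $t\in[0,1]$). Now bound $x$ from below: $\tfrac{(d-1)^2}{4d^4}\geq \tfrac{1}{16 d^2}$ for $d\geq 2$, and $(1-c)^2\geq K^2 d^{-2\beta}$, so $x\geq \tfrac{K^2}{16}d^{-2\beta-2}$. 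Therefore
\[
1-\|\PM_{\rm DSG}-\PiM\|\ \geq\ \frac{K^2}{32}\,d^{-(2\beta+2)},
\]
which is the claimed polynomial decay with exponent at worst $2\beta+2$. Since this holds for every permutation $\sigma$, every deterministic scan mixes rapidly whenever the random scan does; the contrapositive is the torpid-mixing statement.

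The one genuinely delicate point is making the notion of ``rapid mixing'' precise and consistent with~\citep{torpid,rapid}: there rapid/torpid mixing is usually phrased in terms of the mixing time or of total-variation convergence rather than the $L^2(\pi)$ spectral gap, so I would either adopt the spectral-gap formulation directly (which is the natural object here and is what Corollary~\ref{cor:bounds} controls) or invoke the standard two-sided relation between the $L^2$ spectral gap and the mixing time for reversible chains, together with Lemma~\ref{lem:nnorm} and~\eqref{MCSG} to pass between $\|\PM^{n}-\PiM\|$ and the gap for the non-reversible $\PM_{\rm DSG}$. Beyond that, the argument is just the chain of elementary inequalities above, with the only arithmetic care needed in tracking the dimension-dependent constants $\tfrac{(d-1)^2}{4d^4}$ and $\tfrac{d}{d-1}$; none of these affect the polynomial order, only the multiplicative constant.
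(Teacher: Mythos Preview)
Your argument is correct and follows essentially the same route as the paper: translate the random-scan gap into an upper bound on $c$ via~\eqref{eq:norm_random_scan}, plug $1-c$ into~\eqref{cor2}, and use $1-\sqrt{1-x}\geq x/2$ together with $(d-1)^2/(4d^4)\geq 1/(16d^2)$ to obtain the $2\beta+2$ rate with constant $K^2/32$. The only cosmetic difference is that your appeal to~\eqref{cor1a} is unnecessary (the bound $1-c\geq \tfrac{d}{d-1}Kd^{-\beta}$ follows directly from the identity $1-\|\PM_{\rm RSG}-\PiM\|=\tfrac{d-1}{d}(1-c)$ implied by~\eqref{eq:norm_random_scan}), and the paper adopts the spectral-gap formulation of rapid mixing outright rather than discussing the relation to mixing time.
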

\begin{proof}
 {
If a random scan mixes rapidly, its spectral gap  $1-\varrho_{{\rm RSG}}(d)$ decays polynomially with dimension $d$, that is $\lim_{d\to\infty}\frac{1-\varrho_{{\rm RSG}}(d)}{d^{-\beta}}>0$ for some $\beta>0$. Therefore, for large enough $d$ we have $\varrho_{{\rm RSG}}(d)\leq 1 -\gamma d^{-\beta}$ for some $\gamma>0$. We pick such $\gamma>0$. Hence, by \eqref{eq:norm_random_scan} we get
\[
c(M_1,\dots,M_d)\leq 1 - \gamma d^{-\beta}\,.
\]
 This together with~\eqref{cor2} the above implies that   \[\varrho_{{\rm DSG}}^2(d)\leq 1- \tfrac{(d-1)^2}{4d^2}\gamma^2  d^{-2\beta-2}\leq 1-\tfrac{\gamma^2}{16} d^{-2\beta-2}\,. 
\]
Hence
\[
1-\varrho_{{\rm DSG}}(d)\geq \tfrac{1}{2}(1-\varrho_{{\rm DSG}}^2(d))\geq \tfrac{\gamma^2}{32}d^{-2\beta-2}\,,
\]
which means that a deterministic scan also mixes rapidly.  
}
\end{proof}
We pose the question whether the reverse implication also holds true.
\begin{open}\label{open:poly}
Is it possible to prove that the decay of $\varrho_{{\rm RSG}}(d)$ is a polynomial function of the dimension $d$, knowing that $\varrho_{{\rm DSG}}(d)$ decays with a polynomial rate with dimension $d$?
\end{open}

\section{Central Limit Theorem and Hoeffding inequality for Gibbs Samplers}\label{sec:CTG}

As a direct consequence of Theorem~\ref{theo:equiv} we prove that the following version of Central Limit Theorem and Hoeffding inequality hold for both  {versions of} Gibbs samplers.

\begin{thm}\label{thm:clt}
Let  a Markov chain $(X_n)_{n\geq
0}$ with a stationary distribution $\pi$ be generated by  the deterministic scan Gibbs sampler $\PM_{\rm DSG}$ with a positive spectral gap or by the random scan Gibbs sampler $\PM_{\rm RSG}$ with a positive spectral gap.  {Assume further that  $(X_n)_{n\geq
0}$ is Harris recurrent}.
Then for any function $f$ with $\pi( f^2)<\infty$  { and any initial distriubtion} we have
\[\frac{1}{\sqrt{n}}\left[\sum_{i=0}^{n-1}\left( f(X_i)-\pi f\right)\right] \overset{D}{\rightarrow}\mathcal{N}(0,\sigma^2(f))\,,\]
 where asymptotic variance $\sigma^2(f)$ is bounded by
 \[\sigma^2(f)\leq\frac{1+\vr}{1-\vr}\pi(f-\pi f)^2\,,\]
 with  $(1-\vr)$ being the $L^2(\pi)$ spectral gap of the considered Gibbs sampler.\\
 Moreover,  for the random scan Gibbs sampler $\PM_{\rm RSG} =\sum_{i=1}^d w_i \PM_i$ with any weights $w_1,\dots,w_d$ satisfying~\eqref{weights} we have $ \vr=   \left\|\PM_{\rm RSG} -\PiM\right\|$, whereas for the deterministic scan Gibbs sampler $\PM_{\rm DSG} =\PM_{\sigma(1)}\PM_{\sigma(2)}\cdots \PM_{\sigma(d)}$ with any permutation $\sigma$ of set $\{1,\dots,d\}$ we have
 $\vr\leq
 \Vert \PM_{\rm DSG} -\PiM\Vert$.  {In both cases, $\vr$ and $\sigma^2(f)$ enjoy explicit bounds in terms of dimension $d$ and generalized Friedrichs angle $c(M_1,\dots,M_d)$.}% result from Corollary~\ref{cor:bounds}.
 \end{thm}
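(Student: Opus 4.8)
The plan is to reduce everything to known CLT and variance-bound results for reversible, respectively geometrically ergodic, Markov chains, exploiting the fact that Theorem~\ref{theo:equiv} converts a spectral gap assumption on \emph{any one} scan into the norm bound $\|\PM^{n_0}-\PiM\|<1$ for \emph{all} scans. First I would treat the random scan: since $\PM_{\rm RSG}$ is reversible and has a positive spectral gap $1-\vr$ with $\vr=\|\PM_{\rm RSG}-\PiM\|$ (as recorded in Section~\ref{sec:prelim}), the CLT for $L^2(\pi)$ functions together with the asymptotic variance bound $\sigma^2(f)\le\frac{1+\vr}{1-\vr}\pi(f-\pi f)^2$ is the classical result of \cite{kipnis1986central} (see also \cite{MR1952001}); Harris recurrence guarantees the conclusion holds for every initial distribution rather than merely from stationarity. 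This disposes of the $\PM_{\rm RSG}$ case and the explicit value of $\vr$ for it.

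For the deterministic scan the operator is non-reversible, so the reversible CLT does not apply directly; instead I would invoke the CLT for (possibly non-reversible) geometrically ergodic Markov chains, which holds for $f\in L^2(\pi)$ — see \cite{MR1952001} and \cite{kontoyiannis2012geometric}. Geometric ergodicity of $\PM_{\rm DSG}$ follows from Theorem~\ref{theo:equiv}: the hypothesis gives condition~\textit{(iv)}, hence a spectral gap for $\PM_{\rm DSG}$, hence GE by \cite[Theorem~1.3]{kontoyiannis2012geometric}. For the variance bound I would use the norm estimate rather than reversibility: set $\vr:=\|\PM_{\rm DSG}-\PiM\|<1$ (finite and strictly below $1$ by condition~\textit{(vi)} of Theorem~\ref{theo:equiv}). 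Writing $f_0=f-\pi f$ and using $\PM_{\rm DSG}^n f_0=(\PM_{\rm DSG}-\PiM)^n f_0$ from Lemma~\ref{lem:nnorm}, the autocovariances satisfy $|\cov_\pi(f(X_0),f(X_n))|=|\langle f_0,(\PM_{\rm DSG}-\PiM)^n f_0\rangle|\le\vr^n\|f_0\|^2$, so the series $\sigma^2(f)=\pi f_0^2+2\sum_{n\ge1}\cov_\pi(f(X_0),f(X_n))$ converges and is bounded by $\|f_0\|^2\bigl(1+2\sum_{n\ge1}\vr^n\bigr)=\frac{1+\vr}{1-\vr}\pi(f-\pi f)^2$. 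Since the true asymptotic variance for a geometrically ergodic chain coincides with this series for $f\in L^2(\pi)$, this yields the stated bound with $\vr\le\|\PM_{\rm DSG}-\PiM\|$ (the inequality, rather than equality, reflecting that the spectral radius of the non-self-adjoint operator $\PM_{\rm DSG}-\PiM$ may be strictly smaller than its norm).

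Finally, the explicit dependence on $d$ and the generalized Friedrichs angle $c(M_1,\dots,M_d)$ is obtained by substituting the bounds of Corollary~\ref{cor:bounds}: for $\PM_{\rm RSG}$ with uniform weights $\vr=\frac{d-1}{d}\bigl(c+\frac{1}{d-1}\bigr)$ by~\eqref{eq:norm_random_scan}, and in general $\vr\le\frac{d-1}{d}\alpha\bigl(c+\frac1{d-1}\bigr)+1-\alpha$; for $\PM_{\rm DSG}$, $\vr\le\sqrt{1-\frac{(d-1)^2}{4d^4}(1-c)^2}$ by~\eqref{cor2}. Plugging these into $\frac{1+\vr}{1-\vr}\pi(f-\pi f)^2$ gives the claimed explicit variance bounds. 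The main obstacle I anticipate is purely one of citation hygiene rather than mathematics: one must cite a version of the non-reversible CLT that genuinely covers all $f\in L^2(\pi)$ under geometric ergalodicity plus Harris recurrence (the $L^2$-spectral-gap route, rather than the $L^2(\pi)$-CLT-requires-reversibility folklore), and be careful that the asymptotic variance appearing in that CLT is exactly the convergent autocovariance series estimated above — this identification is standard for geometrically ergodic chains but should be stated with the right reference.
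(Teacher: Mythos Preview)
Your variance-bound computation via the autocovariance series and Cauchy--Schwarz is exactly what the paper does, and your use of Corollary~\ref{cor:bounds} for the explicit constants matches as well. The difference is in how the CLT itself is established. You split into cases: Kipnis--Varadhan for the reversible $\PM_{\rm RSG}$, and for $\PM_{\rm DSG}$ a black-box CLT for geometrically ergodic chains valid for $f\in L^2(\pi)$. As you correctly flag at the end, the latter citation is the fragile point --- the standard non-reversible GE CLT needs $f\in L^{2+\delta}$, and getting away with $L^2$ typically requires either reversibility or an $L^2$ Poisson solution. The paper sidesteps both the case split and this issue in one stroke: since Theorem~\ref{theo:equiv} yields $\|\PM-\PiM\|\le\vr<1$ for either scan, the series $\hat f=\sum_{i\ge0}(\PM^i f-\PiM f)$ converges in $L^2(\pi)$ and solves the Poisson equation $\hat f-\PM\hat f=f-\pi f$, whereupon \cite[Theorem~17.4.4]{MT2009} delivers the CLT with $\sigma^2(f)=\pi(\hat f^2-(\PM\hat f)^2)$. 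So the paper's route is unified and self-contained, and the very norm bound you already exploit for the variance estimate is precisely what places the Poisson solution in $L^2(\pi)$ --- that is the concrete fix for the obstacle you identified, not merely a matter of citation hygiene.
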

 
\begin{proof} Let $\PM$ be a transition operator for $(X_n)_{n\geq
0}$. We will employ a general result for geometrically ergodic Markov chains, i.e. \cite[Theorem 17.4.4]{MT2009}. By Theorem~\ref{theo:equiv} there exists $\vr<1$ such that $\Vert \PM -\PiM\Vert\leq\vr$ and $\Vert \PM^i-\PiM\Vert\leq \vr^i$. We define $\hat f = \sum_{i=0}^\infty (\PM^i f-\PiM f)$. By the triangle inequality we infer that
 \[\Vert \hat f \Vert\leq\sum_{i=0}^\infty \Vert \PM^i f -\PiM f\Vert\leq \sqrt{\PiM f^2} \sum_{i=0}^\infty \Vert \PM^i-\PiM\Vert\leq\sqrt{\PiM f^2}\sum_{i=0}^\infty \vr^i<\infty\,.\]
Moreover, $\hat f$ satisfies the so-called Poisson equation
 \[ \hat f - \PM \hat f =f -\PiM f\;.\]
Therefore,  assumptions of  \cite[Theorem 17.4.4]{MT2009} are verified and we get that 
\[\frac{1}{\sqrt{n}}\left[\sum_{i=0}^{n-1}\left( f(X_i)-\pi f\right)\right] \overset{D}{\rightarrow}\mathcal{N}(0,\sigma^2(f))\,,\]
where $\sigma^2(f)=\pi({\hat f}^2 -(\PM\hat f)^2)$. Moreover, we have
\[\sigma^2(f)=\pi(f-\pi f)^2+2\sum_{i=1}^\infty \langle f-\pi f,(\PM-\PiM)(f-\pi f)\rangle\,.\]
By the Cauchy--Schwartz inequality and since $\Vert \PM^i-\PiM\Vert\leq \vr^i$ we get
\[\sigma^2(f)\leq \pi(f-\pi f)^2\left[1+2\sum_{i=1}^\infty \vr^i\right]=\frac{1+\vr}{1-\vr}\pi(f-\pi f)^2\,.
\]
 {The bounds for $\vr$ for $\PM_{\rm RSG}$ and $\PM_{\rm DSG}$ result from \eqref{cor1} and \eqref{cor2}, respectively.} %Corollary~\ref{cor:bounds}.
 \end{proof}
 {Another consequence of our analysis and Theorem~\ref{theo:equiv} is that both versions of the Gibbs sampler satisfy the following exponential inequality.}
\begin{thm}\label{thm:Hoeffding}
 Let  a Markov chain $(X_n)_{n\geq
0}$ and stationary distribution $\pi$ be generated by  the deterministic scan Gibbs sampler $\PM_{\rm DSG}$ with a positive spectral gap or by the random scan Gibbs sampler $\PM_{\rm RSG}$ with a positive spectral gap. Let $f$ be a function   {valued in} $[0,1]$ and denote $\mu :=\pi(f)$. Then for any $\varepsilon$ such that $\mu+\varepsilon \leq 1$, any $n\geq 1$, and any initial measure  {$\nu\ll\pi$ with $\tfrac{d\nu}{d\pi}\in L^2(\pi)$}, it holds that
\[
\Pr\left(\sum_{i=1}^n f(X_i)\geq n(\mu+\varepsilon)\right)\leq
\left\Vert \tfrac{d\nu}{d\pi}\right\Vert \exp\left(-\tfrac{1-\vr}{1+\vr}n\varepsilon^2\right)\]
with  $(1-\vr)$ being the $L^2(\pi)$-spectral gap of the considered Gibbs sampler.\\
 Moreover,  for the random scan Gibbs sampler $\PM_{\rm RSG} =\sum_{i=1}^d w_i \PM_i$ with any weights $w_1,\dots,w_d$ satisfying~\eqref{weights} we have $ \vr=   \left\|\PM_{\rm RSG} -\PiM\right\|$, whereas for the deterministic scan Gibbs sampler $\PM_{\rm DSG} =\PM_{\sigma(1)}\PM_{\sigma(2)}\cdots \PM_{\sigma(d)}$ with any permutation $\sigma$ of set $\{1,\dots,d\}$ we have
 $\vr\leq
 \Vert \PM_{\rm DSG} -\PiM\Vert$.   {In both cases, $\vr$ enjoys an explicit bound in terms of dimension $d$ and generalized Friedrichs angle $c(M_1,\dots,M_d)$.}
\end{thm}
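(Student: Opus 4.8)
The plan is to derive the bound as a direct application of the Hoeffding-type concentration inequality for geometrically ergodic Markov chains of~\cite{miasojedowhoeffding}. That inequality applies to any Markov chain with transition operator $\PM$ and stationary law $\pi$ satisfying the one-step contraction $\|\PM-\PiM\|<1$ (the non-reversible substitute for ``reversible with a spectral gap''), and, for $f$ valued in $[0,1]$ with $\mu=\pi(f)$, any $\varepsilon$ with $\mu+\varepsilon\le 1$, any $n\ge 1$, and any initial law $\nu\ll\pi$ with $\tfrac{d\nu}{d\pi}\in L^2(\pi)$, delivers exactly
\[
\Pr\Big(\textstyle\sum_{i=1}^{n} f(X_i)\ge n(\mu+\varepsilon)\Big)\;\le\;\Big\|\tfrac{d\nu}{d\pi}\Big\|\,\exp\!\Big(-\tfrac{1-\vr}{1+\vr}\,n\varepsilon^{2}\Big)
\]
with $\vr=\|\PM-\PiM\|$. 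Hence the hypotheses on $f$, on $\varepsilon$ and on $\nu$ in Theorem~\ref{thm:Hoeffding} are precisely the input of the cited result, and nothing needs to be proved about them: the entire content is the verification of $\|\PM-\PiM\|<1$ for $\PM\in\{\PM_{\rm DSG},\PM_{\rm RSG}\}$.

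This verification is exactly what Theorem~\ref{theo:equiv} supplies. Under the assumption of Theorem~\ref{thm:Hoeffding} the driving operator is a $\PM_{\rm DSG}$ with a positive spectral gap or a $\PM_{\rm RSG}$ with a positive spectral gap, so one --- hence every --- condition \textit{(i)}--\textit{(vi)} of Theorem~\ref{theo:equiv} holds. For $\PM_{\rm RSG}$, reversibility together with the spectral gap gives $\|\PM_{\rm RSG}-\PiM\|<1$, and here $\vr=\|\PM_{\rm RSG}-\PiM\|$ is exactly $1$ minus the spectral gap; for $\PM_{\rm DSG}$, condition \textit{(vi)} is the statement $\|\PM_{\rm DSG}-\PiM\|<1$, and this $\vr$ is an upper bound for $1$ minus the spectral gap of $\PM_{\rm DSG}$. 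The point worth stressing is that Miasojedow's argument really needs the \emph{one-step} contraction $\|\PM-\PiM\|<1$, not merely $\|\PM^{n_0}-\PiM\|<1$ for some large $n_0$; for the non-reversible $\PM_{\rm DSG}$ this is not automatic from the existence of a spectral gap, and it is precisely the non-trivial equivalence of conditions \textit{(iii)} and \textit{(vi)} obtained through the alternating-projection estimate of Lemma~\ref{lem:norm_determ}. Thus Theorem~\ref{thm:Hoeffding} is in effect a corollary of Theorem~\ref{theo:equiv}, and no genuine obstacle remains --- the only delicate step, manufacturing an honest operator-norm contraction for a non-reversible chain, having already been carried out there.

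Finally, the explicit form follows by substituting the quantitative bounds of Corollary~\ref{cor:bounds} and using that $t\mapsto\tfrac{1-t}{1+t}$ is decreasing on $[0,1)$. For $\PM_{\rm RSG}$, $\vr=\|\PM_{\rm RSG}-\PiM\|$ is bounded by the right-hand side of~\eqref{cor1} in terms of $\alpha=d\min_{1\le i\le d} w_i$, the dimension $d$ and the generalized Friedrichs angle $c:=c(M_1,\dots,M_d)$; for $\PM_{\rm DSG}$, $\vr\le\|\PM_{\rm DSG}-\PiM\|$ is bounded by the right-hand side of~\eqref{cor2} in terms of $d$ and $c$. Since enlarging $\vr$ only decreases $\tfrac{1-\vr}{1+\vr}$, inserting these bounds into the display above weakens it to the asserted explicit inequality. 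Note also that, unlike the Central Limit Theorem of Theorem~\ref{thm:clt}, no Harris recurrence is required here; one uses only the standing $\pi$-irreducibility assumed throughout.
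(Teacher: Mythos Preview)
Your proof is correct and follows exactly the paper's own argument: invoke \cite[Theorem~1.1]{miasojedowhoeffding}, use Theorem~\ref{theo:equiv} (specifically the equivalence with condition \textit{(vi)}) to secure the one-step contraction $\|\PM-\PiM\|<1$ that this result requires, and then read off the explicit bounds from \eqref{cor1} and \eqref{cor2}. Your exposition is more detailed than the paper's two-line proof, but the approach is identical.
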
 
 \begin{proof}
 Combining Theorem~\ref{theo:equiv} with \cite[Theorem 1.1]{miasojedowhoeffding} we get required inequality.  {The bounds for $\vr$ for $\PM_{\rm RSG}$ and $\PM_{\rm DSG}$ result from \eqref{cor1} and \eqref{cor2}, respectively.} 
 \end{proof}

\section{On Open problem~\ref{op}}\label{sec:op}

In this section we present the conditions equivalent to the positive answer to Open problem~\ref{op} and give an example suggesting that the answer might be negative.\newline

\begin{remark} 
In order to show that from GE of $\PM_{\rm DSG}$ the hypothesis of Theorem~\ref{theo:main} follows it suffices to prove that GE of $\PM_{\rm DSG}$ implies GE of $\PM_{\rm RSG}$ (or, equivalently, any of the spectral gap properties from Theorem~\ref{theo:equiv}).   GE of $\PM_{\rm DSG}$ is equivalent to each of the following assertions, see \cite[Proposition~2.1]{roberts1997geometric} and \cite[Theorem~15.0.2]{MT2009}.
\begin{enumerate}[$(i)$]
    \item {\it (exponential moment of return time)} There exists some small measurable set $C$ and $\tau$ -- return time to set $C$, and $b > 1$ such that $\sup_{x\in C} \mathbb{E}(b^{\tau}|X_0=x) < \infty$.
    \item {\it (drift condition)} There exists a measurable function $V\geq 1$, a small set $C$, $\lambda<1$, and $K<\infty$, such that $\PM_{\rm DSG}V(x)\leq \lambda V(x)+K\mathds{1}_C(x)$.
    \item {\it (spectral gap in $L_V$)} For a measurable function $V\geq 1$ and a space endowed with a norm $\|f\|_{L_V}:=\sup_{x\in\stany} \frac{|f(x)|}{V(x)}$ the operator  $(\PM_{\rm DSG}-\PiM)$ has a spectral gap in $L_V$.
\end{enumerate} 
A set $C$ is called small if $\pi(C)>0$ and there exist $n,$ $\beta>0$, and a probability measure $\mu$, such that for every measurable set $A$ we have $P_{\rm DSG}^n(x,A)\geq \mathds{1}_C(x)\beta\mu(A).$
\end{remark}

\smallskip

Let us illustrate with an example why we suspect that it might not be possible to infer from GE of $\PM_{\rm DSG}$   {the existence of its positive spectral gap. %In the following example, we show that GE of both $\PM$ and $\PM^*$ does not imply GE of its additive reversibilization $\frac{1}{2}(\PM+\PM^*)$.
}   {Let us observe that for a~Markov chain with a transition operator $\PM$, its spectral gap is positive if and only if  its additive reversibilization $\frac{1}{2}(\PM+\PM^*)$ has a positive spectral gap. That can happen if  the operator $\frac{1}{2}(\PM+\PM^*)$  is geometrically ergodic.}   {In fact, we construct a~Markov chain, such that  $\PM$ and $\PM^*$ are geometrically ergodic, but  $\frac{1}{2}(\PM+\PM^*)$ is not. Therefore, to obtain a possible positive answer to Open Problem~\ref{op}, one must rely on very special properties of Gibbs samplers.}
\begin{example} {Let $\mathbb{Z}^+=\{0,1,2,\dots\}$.}  Let us define the Markov chain $\{X_n\}_{n\geq0}$ on state space $\mathbb{Z}^+\times\mathbb{Z}^+$ by its transition matrix $\PM$ as follows
 \begin{align*}
  P((0,0),(n,n)) = p(n)&\quad \text{for all } n\in \mathbb{Z}^+,\\
  P((n,k),(n,k-1))=1&\quad \text{for all } n\in \mathbb{Z}^+,\ k=2,\dots n\,,\\
  P((n,1),(0,0))=1&\quad \text{for all } n\in \mathbb{Z}^+,\\
  P((k,l),(i,j))=0 &\quad \text{otherwise}\,,
 \end{align*}
where $p(n)$ is arbitrarily chosen distribution with $p(n)>0$ for all $n$ and with finite exponential moments. Clearly, Markov chain $\{X_n\}_{n\geq0}$ is aperiodic and irreducible.  
We define the stopping time $\tau=\inf\{n:\, n>0,\, X_n=(0,0)\}$. By \cite[Theorem 15.0.2]{MT2009} to establish GE it is enough to show that for some $b>1$ we have an exponential moment $\mathbb{E}(b^\tau|X_0=(0,0))<\infty$. But if the chain starts at $(0,0)$ the distribution of $\tau$ is given by $\mathbb{P}(\tau=n)=p(n-1)$ for $n\geq1$. 
Therefore from assumption on $p(n)$ we get that the exponential moment of $\tau$ is finite and further that the chain $\{X_n\}_{n\geq0}$ is geometrically ergodic.
From  \cite[Theorem 10.0.1]{MT2009} we deduce that the stationary distribution $\pi$  {is independent of the second variable.} %{}satisfies $\pi(n,k)=\pi(n)$ for all $n\geq 1$ and $k=1,\dots,n$. 
Hence the adjoint operator $\PM^*$
defined by equation 
\[\pi(k,l)P^*((k,l),(i,j))=\pi(i,j)P((i,j),(k,l))\]
is given by
 \begin{align*}
  P^*((0,0),(n,1)) = p(n)&\quad \text{for all } n\in \mathbb{Z}^+\,,\\
  P^*((n,k-1),(n,k))=1&\quad \text{for all } n\in \mathbb{Z}^+,\ k=2,\dots n\,,\\
  P^*((n,n),(0,0))=1&\quad \text{for all } n\in \mathbb{Z}^+\,,\\
  P^*((k,l),(i,j))=0 &\quad \text{otherwise}\,,
 \end{align*}
The chain generated according to $\PM^*$, by analogous arguments, is also geometrically  ergodic. We will prove though that its additive reversibilisation $\mathsf{K}:=\frac{1}{2}(\PM+\PM^*)$ is not geometrically ergodic. It suffices to show that the conductance for this chain is $0$, because then by Cheeger's inequality~\cite[Theorem~2.1]{lawler1988bounds} we will get that chain does not have a spectral gap. Consequently, by reversibility of $\mathsf{K}$, we obtain
that $\mathsf{K}$ is not geometrically ergodic. The conductance of $\mathsf{K}$ is defined as
\[\varkappa:= \inf_{A\,:\,\pi(A)>0}\frac{\sum_{x\in A}\pi(x)K(x,A^c) }{\pi(A)\pi(A^c)}\,,\]
where $A^c$ is the complement of $A$. For any $n>0$ we define a set $A_n=\{(n,k)\,:\, k=1,\dots,n\}$. Observe that $K((n,k),A_n^c)=0$ for $k=2,\dots,n-1$ and $K((n,k),A_n^c)=\frac{1}{2}$ for
$k\in\{1,n\}$. Since $\pi(A_n^c)>\pi(0,0)>0$  and $\pi(n,k)=\pi(n,j)=\delta$,  { for some $\delta\geq 0$} we get
\[\varkappa\leq \frac{\sum_{x\in A_n}\pi(x)K(x,A^c_n) }{\pi(A_n)\pi(A^c_n)}\leq \frac{\delta }{n\delta\pi(0,0)}=\frac{1}{n\pi(0,0)}\xrightarrow[n\to\infty]{}0\,.\]
Since $n$ can be arbitrarily large, we infer that the conductance $\varkappa=0$, and further chain $\mathsf{K}$ is not geometrically ergodic.
\end{example}

\section{Acknowledgements}
%\subsection*{Funding}
We thank anonymous referees for their very useful comments that improved the paper. The research of IC is supported by NCN grant 2019/34/E/ST1/00120, the research of BM is supported by NCN grant 2018/31/B/ST1/00253. K{\L} has been supported by the Royal Society through the Royal Society University Research Fellowship.
%\subsection*{Competing interests}
%\begin{itemize}
 %   \item The authors have no relevant financial or non-financial interests to disclose.
%\item The authors have no competing interests to declare that are relevant to the content of this article.
%\item All authors certify that they have no affiliations with or involvement in any organization or entity with any financial interest or non-financial interest in the subject matter or materials discussed in this manuscript.
%\item The authors have no financial or proprietary interests in any material discussed in this article.
%\end{itemize}
\bibliographystyle{imsart-nameyear}

\bibliography{refs}

\begin{thebibliography}{57}
% BibTex style file: imsart-nameyear.bst, 2017-11-03
% Default style options (sort=1,type=nameyear).
% Used options (sort=1,type=nameyear).

\bibitem[\protect\citeauthoryear{Amit}{1991}]{amit1991rates}
\begin{barticle}[author]
\bauthor{\bsnm{Amit},~\bfnm{Yali}\binits{Y.}}
(\byear{1991}).
\btitle{On rates of convergence of stochastic relaxation for {G}aussian and
  non-{G}aussian distributions}.
\bjournal{J. Multivariate Anal.}
\bvolume{38}
\bpages{82--99}.
\bdoi{10.1016/0047-259X(91)90033-X}
\bmrnumber{1128938}
\end{barticle}
\endbibitem

\bibitem[\protect\citeauthoryear{Amit and Grenander}{1991}]{amitgrenander1991}
\begin{barticle}[author]
\bauthor{\bsnm{Amit},~\bfnm{Y.}\binits{Y.}} \AND
  \bauthor{\bsnm{Grenander},~\bfnm{U.}\binits{U.}}
(\byear{1991}).
\btitle{Comparing sweep strategies for stochastic relaxation}.
\bjournal{J. Multivariate Anal.}
\bvolume{37}
\bpages{197--222}.
\bdoi{10.1016/0047-259X(91)90080-L}
\bmrnumber{1114473}
\end{barticle}
\endbibitem

\bibitem[\protect\citeauthoryear{Andrieu}{2016}]{andrieu2016biometrika}
\begin{barticle}[author]
\bauthor{\bsnm{Andrieu},~\bfnm{C.}\binits{C.}}
(\byear{2016}).
\btitle{On random- and systematic-scan samplers}.
\bjournal{Biometrika}
\bvolume{103}
\bpages{719--726}.
\bmrnumber{3551794}
\end{barticle}
\endbibitem

\bibitem[\protect\citeauthoryear{Andrieu et~al.}{2022}]{andrieu2021}
\begin{barticle}[author]
\bauthor{\bsnm{Andrieu},~\bfnm{Christophe}\binits{C.}},
  \bauthor{\bsnm{Lee},~\bfnm{Anthony}\binits{A.}},
  \bauthor{\bsnm{Power},~\bfnm{Sam}\binits{S.}} \AND
  \bauthor{\bsnm{Wang},~\bfnm{Andi~Q.}\binits{A.~Q.}}
(\byear{2022}).
\btitle{Comparison of {M}arkov chains via weak {P}oincar\'{e} inequalities with
  application to pseudo-marginal {MCMC}}.
\bjournal{Ann. Statist.}
\bvolume{50}
\bpages{3592--3618}.
\bdoi{10.1214/22-aos2241}
\bmrnumber{4524509}
\end{barticle}
\endbibitem

\bibitem[\protect\citeauthoryear{Aronszajn}{1950}]{aronszajn}
\begin{barticle}[author]
\bauthor{\bsnm{Aronszajn},~\bfnm{N.}\binits{N.}}
(\byear{1950}).
\btitle{Theory of reproducing kernels}.
\bjournal{Trans. Amer. Math. Soc.}
\bvolume{68}
\bpages{337--404}.
\bmrnumber{51437}
\end{barticle}
\endbibitem

\bibitem[\protect\citeauthoryear{Badea, Grivaux and
  M\"{u}ller}{2011}]{badea2012rate}
\begin{barticle}[author]
\bauthor{\bsnm{Badea},~\bfnm{C.}\binits{C.}},
  \bauthor{\bsnm{Grivaux},~\bfnm{S.}\binits{S.}} \AND
  \bauthor{\bsnm{M\"{u}ller},~\bfnm{V.}\binits{V.}}
(\byear{2011}).
\btitle{The rate of convergence in the method of alternating projections}.
\bjournal{Algebra i Analiz}
\bvolume{23}
\bpages{1--30}.
\bdoi{10.1090/S1061-0022-2012-01202-1}
\bmrnumber{2896163}
\end{barticle}
\endbibitem

\bibitem[\protect\citeauthoryear{Ben~Arous and Jagannath}{2018}]{benarous2018}
\begin{barticle}[author]
\bauthor{\bsnm{Ben~Arous},~\bfnm{G\'{e}rard}\binits{G.}} \AND
  \bauthor{\bsnm{Jagannath},~\bfnm{Aukosh}\binits{A.}}
(\byear{2018}).
\btitle{Spectral gap estimates in mean field spin glasses}.
\bjournal{Comm. Math. Phys.}
\bvolume{361}
\bpages{1--52}.
\bdoi{10.1007/s00220-018-3152-6}
\bmrnumber{3825934}
\end{barticle}
\endbibitem

\bibitem[\protect\citeauthoryear{Bhatnagar and Randall}{2004}]{MR2291087}
\begin{binproceedings}[author]
\bauthor{\bsnm{Bhatnagar},~\bfnm{Nayantara}\binits{N.}} \AND
  \bauthor{\bsnm{Randall},~\bfnm{Dana}\binits{D.}}
(\byear{2004}).
\btitle{Torpid mixing of simulated tempering on the {P}otts model}.
In \bbooktitle{Proceedings of the {F}ifteenth {A}nnual {ACM}-{SIAM} {S}ymposium
  on {D}iscrete {A}lgorithms}
\bpages{478--487}.
\bpublisher{ACM, New York}.
\bmrnumber{2291087}
\end{binproceedings}
\endbibitem

\bibitem[\protect\citeauthoryear{Chimisov, Latuszynski and
  Roberts}{2018}]{chimisov2018adapting}
\begin{bmisc}[author]
\bauthor{\bsnm{Chimisov},~\bfnm{Cyril}\binits{C.}},
  \bauthor{\bsnm{Latuszynski},~\bfnm{Krzysztof}\binits{K.}} \AND
  \bauthor{\bsnm{Roberts},~\bfnm{Gareth}\binits{G.}}
(\byear{2018}).
\btitle{Adapting the Gibbs sampler}.
\end{bmisc}
\endbibitem

\bibitem[\protect\citeauthoryear{De~Santis}{2002}]{desantis2002}
\begin{barticle}[author]
\bauthor{\bsnm{De~Santis},~\bfnm{Emilio}\binits{E.}}
(\byear{2002}).
\btitle{Glauber dynamics of spin glasses at low and high temperature}.
\bjournal{Ann. Inst. H. Poincar\'{e} Probab. Statist.}
\bvolume{38}
\bpages{681--710}.
\bdoi{10.1016/S0246-0203(02)01106-8}
\bmrnumber{1931583}
\end{barticle}
\endbibitem

\bibitem[\protect\citeauthoryear{Deutsch and Hundal}{1997}]{deutschhundal}
\begin{barticle}[author]
\bauthor{\bsnm{Deutsch},~\bfnm{F.}\binits{F.}} \AND
  \bauthor{\bsnm{Hundal},~\bfnm{H.}\binits{H.}}
(\byear{1997}).
\btitle{The rate of convergence for the method of alternating projections.
  {II}}.
\bjournal{J. Math. Anal. Appl.}
\bvolume{205}
\bpages{381--405}.
\bmrnumber{1428355}
\end{barticle}
\endbibitem

\bibitem[\protect\citeauthoryear{Diaconis}{2013}]{MR3102552}
\begin{barticle}[author]
\bauthor{\bsnm{Diaconis},~\bfnm{Persi}\binits{P.}}
(\byear{2013}).
\btitle{Some things we've learned (about {M}arkov chain {M}onte {C}arlo)}.
\bjournal{Bernoulli}
\bvolume{19}
\bpages{1294--1305}.
\bdoi{10.3150/12-BEJSP09}
\bmrnumber{3102552}
\end{barticle}
\endbibitem

\bibitem[\protect\citeauthoryear{Diaconis, Khare and
  Saloff-Coste}{2010}]{diaconis2010}
\begin{barticle}[author]
\bauthor{\bsnm{Diaconis},~\bfnm{P.}\binits{P.}},
  \bauthor{\bsnm{Khare},~\bfnm{K.}\binits{K.}} \AND
  \bauthor{\bsnm{Saloff-Coste},~\bfnm{L.}\binits{L.}}
(\byear{2010}).
\btitle{Stochastic alternating projections}.
\bjournal{Illinois J. Math.}
\bvolume{54}
\bpages{963--979}.
\bmrnumber{2928343}
\end{barticle}
\endbibitem

\bibitem[\protect\citeauthoryear{Eldan, Koehler and Zeitouni}{2022}]{eldan2022}
\begin{barticle}[author]
\bauthor{\bsnm{Eldan},~\bfnm{Ronen}\binits{R.}},
  \bauthor{\bsnm{Koehler},~\bfnm{Frederic}\binits{F.}} \AND
  \bauthor{\bsnm{Zeitouni},~\bfnm{Ofer}\binits{O.}}
(\byear{2022}).
\btitle{A spectral condition for spectral gap: fast mixing in high-temperature
  {I}sing models}.
\bjournal{Probab. Theory Related Fields}
\bvolume{182}
\bpages{1035--1051}.
\bdoi{10.1007/s00440-021-01085-x}
\bmrnumber{4408509}
\end{barticle}
\endbibitem

\bibitem[\protect\citeauthoryear{Friedrichs}{1937}]{friedrichs}
\begin{barticle}[author]
\bauthor{\bsnm{Friedrichs},~\bfnm{K.}\binits{K.}}
(\byear{1937}).
\btitle{On certain inequalities and characteristic value problems for analytic
  functions and for functions of two variables}.
\bjournal{Trans. Amer. Math. Soc.}
\bvolume{41}
\bpages{321--364}.
\bmrnumber{1501907}
\end{barticle}
\endbibitem

\bibitem[\protect\citeauthoryear{Gelfand and Smith}{1990}]{gelfand1990sampling}
\begin{barticle}[author]
\bauthor{\bsnm{Gelfand},~\bfnm{Alan~E.}\binits{A.~E.}} \AND
  \bauthor{\bsnm{Smith},~\bfnm{Adrian F.~M.}\binits{A.~F.~M.}}
(\byear{1990}).
\btitle{Sampling-based approaches to calculating marginal densities}.
\bjournal{J. Amer. Statist. Assoc.}
\bvolume{85}
\bpages{398--409}.
\bmrnumber{1141740}
\end{barticle}
\endbibitem

\bibitem[\protect\citeauthoryear{Geman and Geman}{1984}]{gemangeman}
\begin{barticle}[author]
\bauthor{\bsnm{Geman},~\bfnm{S.}\binits{S.}} \AND
  \bauthor{\bsnm{Geman},~\bfnm{D.}\binits{D.}}
(\byear{1984}).
\btitle{Stochastic Relaxation, {G}ibbs Distributions, and the {B}ayesian
  Restoration of Images}.
\bjournal{IEEE Transactions on Pattern Analysis and Machine Intelligence}
\bvolume{PAMI-6}
\bpages{721-741}.
\end{barticle}
\endbibitem

\bibitem[\protect\citeauthoryear{Gheissari and Jagannath}{2019}]{gheissari2019}
\begin{barticle}[author]
\bauthor{\bsnm{Gheissari},~\bfnm{Reza}\binits{R.}} \AND
  \bauthor{\bsnm{Jagannath},~\bfnm{Aukosh}\binits{A.}}
(\byear{2019}).
\btitle{On the spectral gap of spherical spin glass dynamics}.
\bjournal{Ann. Inst. Henri Poincar\'{e} Probab. Stat.}
\bvolume{55}
\bpages{756--776}.
\bdoi{10.1214/18-aihp897}
\bmrnumber{3949952}
\end{barticle}
\endbibitem

\bibitem[\protect\citeauthoryear{Glauber}{1963}]{MR148410}
\begin{barticle}[author]
\bauthor{\bsnm{Glauber},~\bfnm{Roy~J.}\binits{R.~J.}}
(\byear{1963}).
\btitle{Time-dependent statistics of the {I}sing model}.
\bjournal{J. Mathematical Phys.}
\bvolume{4}
\bpages{294--307}.
\bdoi{10.1063/1.1703954}
\bmrnumber{148410}
\end{barticle}
\endbibitem

\bibitem[\protect\citeauthoryear{Greenwood, McKeague and
  Wefelmeyer}{1998}]{greenwood}
\begin{barticle}[author]
\bauthor{\bsnm{Greenwood},~\bfnm{P.~E.}\binits{P.~E.}},
  \bauthor{\bsnm{McKeague},~\bfnm{I.~W.}\binits{I.~W.}} \AND
  \bauthor{\bsnm{Wefelmeyer},~\bfnm{W.}\binits{W.}}
(\byear{1998}).
\btitle{Information bounds for {G}ibbs samplers}.
\bjournal{Ann. Statist.}
\bvolume{26}
\bpages{2128--2156}.
\bmrnumber{1700224}
\end{barticle}
\endbibitem

\bibitem[\protect\citeauthoryear{He et~al.}{2016}]{HeDeSaMiRe}
\begin{binproceedings}[author]
\bauthor{\bsnm{He},~\bfnm{B.~D.}\binits{B.~D.}},
  \bauthor{\bsnm{De~Sa},~\bfnm{C.~M.}\binits{C.~M.}},
  \bauthor{\bsnm{Mitliagkas},~\bfnm{I.}\binits{I.}} \AND
  \bauthor{\bsnm{R\'{e}},~\bfnm{C.}\binits{C.}}
(\byear{2016}).
\btitle{Scan Order in {G}ibbs Sampling: Models in Which it Matters and Bounds
  on How Much}.
In \bbooktitle{Advances in Neural Information Processing Systems}
\bvolume{29}.
\end{binproceedings}
\endbibitem

\bibitem[\protect\citeauthoryear{Hobert and Geyer}{1998}]{hobert1998geometric}
\begin{barticle}[author]
\bauthor{\bsnm{Hobert},~\bfnm{J.~P.}\binits{J.~P.}} \AND
  \bauthor{\bsnm{Geyer},~\bfnm{C.~J.}\binits{C.~J.}}
(\byear{1998}).
\btitle{{Geometric ergodicity of Gibbs and block Gibbs samplers for a
  hierarchical random effects model}}.
\bjournal{Journal of Multivariate Analysis}
\bvolume{67}
\bpages{414--430}.
\end{barticle}
\endbibitem

\bibitem[\protect\citeauthoryear{Jagannath}{2019}]{jagannath2019}
\begin{barticle}[author]
\bauthor{\bsnm{Jagannath},~\bfnm{Aukosh}\binits{A.}}
(\byear{2019}).
\btitle{Dynamics of mean field spin glasses on short and long timescales}.
\bjournal{J. Math. Phys.}
\bvolume{60}
\bpages{083305, 9}.
\bdoi{10.1063/1.5094173}
\bmrnumber{3993758}
\end{barticle}
\endbibitem

\bibitem[\protect\citeauthoryear{Johnson and Jones}{2010}]{johnson2010gibbs}
\begin{barticle}[author]
\bauthor{\bsnm{Johnson},~\bfnm{Alicia~A.}\binits{A.~A.}} \AND
  \bauthor{\bsnm{Jones},~\bfnm{Galin~L.}\binits{G.~L.}}
(\byear{2010}).
\btitle{Gibbs sampling for a {B}ayesian hierarchical general linear model}.
\bjournal{Electron. J. Stat.}
\bvolume{4}
\bpages{313--333}.
\bdoi{10.1214/09-EJS515}
\bmrnumber{2645487}
\end{barticle}
\endbibitem

\bibitem[\protect\citeauthoryear{Johnson, Jones and Neath}{2013}]{JJN}
\begin{barticle}[author]
\bauthor{\bsnm{Johnson},~\bfnm{Alicia~A.}\binits{A.~A.}},
  \bauthor{\bsnm{Jones},~\bfnm{Galin~L.}\binits{G.~L.}} \AND
  \bauthor{\bsnm{Neath},~\bfnm{Ronald~C.}\binits{R.~C.}}
(\byear{2013}).
\btitle{Component-wise {M}arkov chain {M}onte {C}arlo: uniform and geometric
  ergodicity under mixing and composition}.
\bjournal{Statist. Sci.}
\bvolume{28}
\bpages{360--375}.
\bdoi{10.1214/13-STS423}
\bmrnumber{3135537}
\end{barticle}
\endbibitem

\bibitem[\protect\citeauthoryear{Jones and Hobert}{2004}]{jones2004sufficient}
\begin{barticle}[author]
\bauthor{\bsnm{Jones},~\bfnm{Galin~L.}\binits{G.~L.}} \AND
  \bauthor{\bsnm{Hobert},~\bfnm{James~P.}\binits{J.~P.}}
(\byear{2004}).
\btitle{Sufficient burn-in for {G}ibbs samplers for a hierarchical random
  effects model}.
\bjournal{Ann. Statist.}
\bvolume{32}
\bpages{784--817}.
\bdoi{10.1214/009053604000000184}
\bmrnumber{2060178}
\end{barticle}
\endbibitem

\bibitem[\protect\citeauthoryear{Jones, Roberts and Rosenthal}{2014}]{JRR}
\begin{barticle}[author]
\bauthor{\bsnm{Jones},~\bfnm{Galin~L.}\binits{G.~L.}},
  \bauthor{\bsnm{Roberts},~\bfnm{Gareth~O.}\binits{G.~O.}} \AND
  \bauthor{\bsnm{Rosenthal},~\bfnm{Jeffrey~S.}\binits{J.~S.}}
(\byear{2014}).
\btitle{Convergence of conditional {M}etropolis-{H}astings samplers}.
\bjournal{Adv. in Appl. Probab.}
\bvolume{46}
\bpages{422--445}.
\bdoi{10.1239/aap/1401369701}
\bmrnumber{3215540}
\end{barticle}
\endbibitem

\bibitem[\protect\citeauthoryear{Kipnis and Varadhan}{1986}]{kipnis1986central}
\begin{barticle}[author]
\bauthor{\bsnm{Kipnis},~\bfnm{C.}\binits{C.}} \AND
  \bauthor{\bsnm{Varadhan},~\bfnm{SRS}\binits{S.}}
(\byear{1986}).
\btitle{{Central limit theorem for additive functionals of reversible Markov
  processes and applications to simple exclusions}}.
\bjournal{Communications in Mathematical Physics}
\bvolume{104}
\bpages{1--19}.
\end{barticle}
\endbibitem

\bibitem[\protect\citeauthoryear{Kontoyiannis and Meyn}{2003}]{MR1952001}
\begin{barticle}[author]
\bauthor{\bsnm{Kontoyiannis},~\bfnm{I.}\binits{I.}} \AND
  \bauthor{\bsnm{Meyn},~\bfnm{S.~P.}\binits{S.~P.}}
(\byear{2003}).
\btitle{Spectral theory and limit theorems for geometrically ergodic {M}arkov
  processes}.
\bjournal{Ann. Appl. Probab.}
\bvolume{13}
\bpages{304--362}.
\bdoi{10.1214/aoap/1042765670}
\bmrnumber{1952001 (2003m:60187)}
\end{barticle}
\endbibitem

\bibitem[\protect\citeauthoryear{Kontoyiannis and
  Meyn}{2012}]{kontoyiannis2012geometric}
\begin{barticle}[author]
\bauthor{\bsnm{Kontoyiannis},~\bfnm{I.}\binits{I.}} \AND
  \bauthor{\bsnm{Meyn},~\bfnm{S.~P.}\binits{S.~P.}}
(\byear{2012}).
\btitle{Geometric ergodicity and the spectral gap of non-reversible {M}arkov
  chains}.
\bjournal{Probab. Theory Related Fields}
\bvolume{154}
\bpages{327--339}.
\bdoi{10.1007/s00440-011-0373-4}
\bmrnumber{2981426}
\end{barticle}
\endbibitem

\bibitem[\protect\citeauthoryear{{\L}atuszy\'{n}ski, Roberts and
  Rosenthal}{2013}]{MR3059204}
\begin{barticle}[author]
\bauthor{\bsnm{{\L}atuszy\'{n}ski},~\bfnm{Krzysztof}\binits{K.}},
  \bauthor{\bsnm{Roberts},~\bfnm{Gareth~O.}\binits{G.~O.}} \AND
  \bauthor{\bsnm{Rosenthal},~\bfnm{Jeffrey~S.}\binits{J.~S.}}
(\byear{2013}).
\btitle{Adaptive {G}ibbs samplers and related {MCMC} methods}.
\bjournal{Ann. Appl. Probab.}
\bvolume{23}
\bpages{66--98}.
\bdoi{10.1214/11-AAP806}
\bmrnumber{3059204}
\end{barticle}
\endbibitem

\bibitem[\protect\citeauthoryear{Lawler and Sokal}{1988}]{lawler1988bounds}
\begin{barticle}[author]
\bauthor{\bsnm{Lawler},~\bfnm{G.~F.}\binits{G.~F.}} \AND
  \bauthor{\bsnm{Sokal},~\bfnm{A.~D.}\binits{A.~D.}}
(\byear{1988}).
\btitle{{Bounds on the $L^2$ spectrum for Markov chains and Markov processes: a
  generalization of Cheegers's inequality}}.
\bjournal{Transactions of the American Mathematical Society}
\bvolume{309}
\bpages{557--580}.
\end{barticle}
\endbibitem

\bibitem[\protect\citeauthoryear{Levin, Peres and Wilmer}{2009}]{MR2466937}
\begin{bbook}[author]
\bauthor{\bsnm{Levin},~\bfnm{David~A.}\binits{D.~A.}},
  \bauthor{\bsnm{Peres},~\bfnm{Yuval}\binits{Y.}} \AND
  \bauthor{\bsnm{Wilmer},~\bfnm{Elizabeth~L.}\binits{E.~L.}}
(\byear{2009}).
\btitle{Markov chains and mixing times}.
\bpublisher{American Mathematical Society, Providence, RI}
\bnote{With a chapter by James G. Propp and David B. Wilson}.
\bdoi{10.1090/mbk/058}
\bmrnumber{2466937}
\end{bbook}
\endbibitem

\bibitem[\protect\citeauthoryear{Liu, Wong and Kong}{1994}]{liuwongkong}
\begin{barticle}[author]
\bauthor{\bsnm{Liu},~\bfnm{Jun~S.}\binits{J.~S.}},
  \bauthor{\bsnm{Wong},~\bfnm{Wing~Hung}\binits{W.~H.}} \AND
  \bauthor{\bsnm{Kong},~\bfnm{Augustine}\binits{A.}}
(\byear{1994}).
\btitle{Covariance structure of the {G}ibbs sampler with applications to the
  comparisons of estimators and augmentation schemes}.
\bjournal{Biometrika}
\bvolume{81}
\bpages{27--40}.
\bdoi{10.1093/biomet/81.1.27}
\bmrnumber{1279653}
\end{barticle}
\endbibitem

\bibitem[\protect\citeauthoryear{Madras and Zheng}{2003}]{MR1943860}
\begin{barticle}[author]
\bauthor{\bsnm{Madras},~\bfnm{Neal}\binits{N.}} \AND
  \bauthor{\bsnm{Zheng},~\bfnm{Zhongrong}\binits{Z.}}
(\byear{2003}).
\btitle{On the swapping algorithm}.
\bjournal{Random Structures Algorithms}
\bvolume{22}
\bpages{66--97}.
\bdoi{10.1002/rsa.10066}
\bmrnumber{1943860}
\end{barticle}
\endbibitem

\bibitem[\protect\citeauthoryear{Maire, Douc and Olsson}{2014}]{maire2014}
\begin{barticle}[author]
\bauthor{\bsnm{Maire},~\bfnm{F.}\binits{F.}},
  \bauthor{\bsnm{Douc},~\bfnm{R.}\binits{R.}} \AND
  \bauthor{\bsnm{Olsson},~\bfnm{J.}\binits{J.}}
(\byear{2014}).
\btitle{Comparison of asymptotic variances of inhomogeneous {M}arkov chains
  with application to {M}arkov chain {M}onte {C}arlo methods}.
\bjournal{Ann. Statist.}
\bvolume{42}
\bpages{1483--1510}.
\bmrnumber{3262458}
\end{barticle}
\endbibitem

\bibitem[\protect\citeauthoryear{Marchev and
  Hobert}{2004}]{marchev2004geometric}
\begin{barticle}[author]
\bauthor{\bsnm{Marchev},~\bfnm{D.}\binits{D.}} \AND
  \bauthor{\bsnm{Hobert},~\bfnm{J.~P.}\binits{J.~P.}}
(\byear{2004}).
\btitle{Geometric Ergodicity of van {D}yk and {M}eng's Algorithm for the
  Multivariate Student's $t$ Model}.
\bjournal{Journal of the American Statistical Association}
\bvolume{99}
\bpages{228--238}.
\end{barticle}
\endbibitem

\bibitem[\protect\citeauthoryear{Martinelli, Sinclair and
  Weitz}{2004}]{martinelli2004}
\begin{barticle}[author]
\bauthor{\bsnm{Martinelli},~\bfnm{Fabio}\binits{F.}},
  \bauthor{\bsnm{Sinclair},~\bfnm{Alistair}\binits{A.}} \AND
  \bauthor{\bsnm{Weitz},~\bfnm{Dror}\binits{D.}}
(\byear{2004}).
\btitle{Glauber dynamics on trees: boundary conditions and mixing time}.
\bjournal{Comm. Math. Phys.}
\bvolume{250}
\bpages{301--334}.
\bdoi{10.1007/s00220-004-1147-y}
\bmrnumber{2094519}
\end{barticle}
\endbibitem

\bibitem[\protect\citeauthoryear{Meyn and Tweedie}{2009}]{MT2009}
\begin{bbook}[author]
\bauthor{\bsnm{Meyn},~\bfnm{S.~P.}\binits{S.~P.}} \AND
  \bauthor{\bsnm{Tweedie},~\bfnm{R.~L.}\binits{R.~L.}}
(\byear{2009}).
\btitle{Markov chains and stochastic stability}.
\bpublisher{Cambridge University Press}.
\end{bbook}
\endbibitem

\bibitem[\protect\citeauthoryear{Miasojedow}{2014}]{miasojedowhoeffding}
\begin{barticle}[author]
\bauthor{\bsnm{Miasojedow},~\bfnm{B.}\binits{B.}}
(\byear{2014}).
\btitle{Hoeffding's inequalities for geometrically ergodic {M}arkov chains on
  general state space}.
\bjournal{Statist. Probab. Lett.}
\bvolume{87}
\bpages{115--120}.
\bmrnumber{3168944}
\end{barticle}
\endbibitem

\bibitem[\protect\citeauthoryear{Oppenheim}{2018}]{oppenheim}
\begin{barticle}[author]
\bauthor{\bsnm{Oppenheim},~\bfnm{I.}\binits{I.}}
(\byear{2018}).
\btitle{Angle criteria for uniform convergence of averaged projections and
  cyclic or random products of projections}.
\bjournal{Israel J. Math.}
\bvolume{223}
\bpages{343--362}.
\bmrnumber{3773065}
\end{barticle}
\endbibitem

\bibitem[\protect\citeauthoryear{Papaspiliopoulos and
  Roberts}{2008}]{MR2387965}
\begin{barticle}[author]
\bauthor{\bsnm{Papaspiliopoulos},~\bfnm{Omiros}\binits{O.}} \AND
  \bauthor{\bsnm{Roberts},~\bfnm{Gareth}\binits{G.}}
(\byear{2008}).
\btitle{Stability of the {G}ibbs sampler for {B}ayesian hierarchical models}.
\bjournal{Ann. Statist.}
\bvolume{36}
\bpages{95--117}.
\bdoi{10.1214/009053607000000749}
\bmrnumber{2387965}
\end{barticle}
\endbibitem

\bibitem[\protect\citeauthoryear{Papaspiliopoulos, Roberts and
  Zanella}{2020}]{papaspiliopoulos2020}
\begin{barticle}[author]
\bauthor{\bsnm{Papaspiliopoulos},~\bfnm{O.}\binits{O.}},
  \bauthor{\bsnm{Roberts},~\bfnm{G.~O.}\binits{G.~O.}} \AND
  \bauthor{\bsnm{Zanella},~\bfnm{G.}\binits{G.}}
(\byear{2020}).
\btitle{Scalable inference for crossed random effects models}.
\bjournal{Biometrika}
\bvolume{107}
\bpages{25--40}.
\bmrnumber{4064138}
\end{barticle}
\endbibitem

\bibitem[\protect\citeauthoryear{Qin}{2024}]{qin2024analysistwocomponentgibbssamplers}
\begin{bmisc}[author]
\bauthor{\bsnm{Qin},~\bfnm{Qian}\binits{Q.}}
(\byear{2024}).
\btitle{Analysis of two-component Gibbs samplers using the theory of two
  projections}.
\end{bmisc}
\endbibitem

\bibitem[\protect\citeauthoryear{Qin and Jones}{2022}]{qin2021}
\begin{barticle}[author]
\bauthor{\bsnm{Qin},~\bfnm{Qian}\binits{Q.}} \AND
  \bauthor{\bsnm{Jones},~\bfnm{Galin~L.}\binits{G.~L.}}
(\byear{2022}).
\btitle{Convergence rates of two-component {MCMC} samplers}.
\bjournal{Bernoulli}
\bvolume{28}
\bpages{859--885}.
\bdoi{10.3150/21-bej1369}
\bmrnumber{4388922}
\end{barticle}
\endbibitem

\bibitem[\protect\citeauthoryear{Qin and Wang}{2024}]{Qin2024}
\begin{barticle}[author]
\bauthor{\bsnm{Qin},~\bfnm{Qian}\binits{Q.}} \AND
  \bauthor{\bsnm{Wang},~\bfnm{Guanyang}\binits{G.}}
(\byear{2024}).
\btitle{Spectral telescope: convergence rate bounds for random-scan {G}ibbs
  samplers based on a hierarchical structure}.
\bjournal{Ann. Appl. Probab.}
\bvolume{34}
\bpages{1319--1349}.
\bdoi{10.1214/23-aap1992}
\bmrnumber{4700260}
\end{barticle}
\endbibitem

\bibitem[\protect\citeauthoryear{Roberts and
  Rosenthal}{1997}]{roberts1997geometric}
\begin{barticle}[author]
\bauthor{\bsnm{Roberts},~\bfnm{Gareth~O.}\binits{G.~O.}} \AND
  \bauthor{\bsnm{Rosenthal},~\bfnm{Jeffrey~S.}\binits{J.~S.}}
(\byear{1997}).
\btitle{Geometric ergodicity and hybrid {M}arkov chains}.
\bjournal{Electron. Comm. Probab.}
\bvolume{2}
\bpages{no. 2, 13--25}.
\bdoi{10.1214/ECP.v2-981}
\bmrnumber{1448322}
\end{barticle}
\endbibitem

\bibitem[\protect\citeauthoryear{Roberts and
  Rosenthal}{2015}]{Roberts2015SurprisingCP}
\begin{barticle}[author]
\bauthor{\bsnm{Roberts},~\bfnm{G.~O.}\binits{G.~O.}} \AND
  \bauthor{\bsnm{Rosenthal},~\bfnm{J.~S.}\binits{J.~S.}}
(\byear{2015}).
\btitle{Surprising Convergence Properties of Some Simple Gibbs Samplers under
  Various Scans}.
\bjournal{International Journal of Statistics and Probability}
\bvolume{5}
\bpages{51}.
\end{barticle}
\endbibitem

\bibitem[\protect\citeauthoryear{Roberts and Sahu}{1997}]{robertssahu97}
\begin{barticle}[author]
\bauthor{\bsnm{Roberts},~\bfnm{G.~O.}\binits{G.~O.}} \AND
  \bauthor{\bsnm{Sahu},~\bfnm{S.~K.}\binits{S.~K.}}
(\byear{1997}).
\btitle{Updating schemes, correlation structure, blocking and parameterization
  for the {G}ibbs sampler}.
\bjournal{J. Roy. Statist. Soc. Ser. B}
\bvolume{59}
\bpages{291--317}.
\bmrnumber{1440584}
\end{barticle}
\endbibitem

\bibitem[\protect\citeauthoryear{Rudin}{1987}]{Rudin}
\begin{bbook}[author]
\bauthor{\bsnm{Rudin},~\bfnm{W.}\binits{W.}}
(\byear{1987}).
\btitle{Real and complex analysis},
\bedition{Third} ed.
\bpublisher{McGraw-Hill Book Co., New York}.
\bmrnumber{924157}
\end{bbook}
\endbibitem

\bibitem[\protect\citeauthoryear{Smith and Roberts}{1993}]{MR1210421}
\begin{barticle}[author]
\bauthor{\bsnm{Smith},~\bfnm{A.~F.~M.}\binits{A.~F.~M.}} \AND
  \bauthor{\bsnm{Roberts},~\bfnm{G.~O.}\binits{G.~O.}}
(\byear{1993}).
\btitle{Bayesian computation via the {G}ibbs sampler and related {M}arkov chain
  {M}onte {C}arlo methods}.
\bjournal{J. Roy. Statist. Soc. Ser. B}
\bvolume{55}
\bpages{3--23}.
\bmrnumber{1210421}
\end{barticle}
\endbibitem

\bibitem[\protect\citeauthoryear{Smith, Solmon and
  Wagner}{1977}]{smithsolmonwagner}
\begin{barticle}[author]
\bauthor{\bsnm{Smith},~\bfnm{K.~T.}\binits{K.~T.}},
  \bauthor{\bsnm{Solmon},~\bfnm{D.~C.}\binits{D.~C.}} \AND
  \bauthor{\bsnm{Wagner},~\bfnm{S.~L.}\binits{S.~L.}}
(\byear{1977}).
\btitle{Practical and mathematical aspects of the problem of reconstructing
  objects from radiographs}.
\bjournal{Bull. Amer. Math. Soc.}
\bvolume{83}
\bpages{1227--1270}.
\bmrnumber{490032}
\end{barticle}
\endbibitem

\bibitem[\protect\citeauthoryear{Tan and Hobert}{2009}]{tan2009block}
\begin{barticle}[author]
\bauthor{\bsnm{Tan},~\bfnm{A.}\binits{A.}} \AND
  \bauthor{\bsnm{Hobert},~\bfnm{J.~P.}\binits{J.~P.}}
(\byear{2009}).
\btitle{Block {G}ibbs sampling for {B}ayesian random effects models with
  improper priors: {C}onvergence and regeneration}.
\bjournal{Journal of Computational and Graphical Statistics}
\bvolume{18}
\bpages{861--878}.
\end{barticle}
\endbibitem

\bibitem[\protect\citeauthoryear{Tan, Jones and Hobert}{2013}]{TJH}
\begin{bincollection}[author]
\bauthor{\bsnm{Tan},~\bfnm{Aixin}\binits{A.}},
  \bauthor{\bsnm{Jones},~\bfnm{Galin~L.}\binits{G.~L.}} \AND
  \bauthor{\bsnm{Hobert},~\bfnm{James~P.}\binits{J.~P.}}
(\byear{2013}).
\btitle{On the geometric ergodicity of two-variable {G}ibbs samplers}.
In \bbooktitle{Advances in modern statistical theory and applications: a
  {F}estschrift in honor of {M}orris {L}. {E}aton}.
\bseries{Inst. Math. Stat. (IMS) Collect.}
\bvolume{10}
\bpages{25--42}.
\bpublisher{Inst. Math. Statist., Beachwood, OH}.
\bmrnumber{3586937}
\end{bincollection}
\endbibitem

\bibitem[\protect\citeauthoryear{Woodard, Schmidler and Huber}{2009a}]{rapid}
\begin{barticle}[author]
\bauthor{\bsnm{Woodard},~\bfnm{Dawn~B.}\binits{D.~B.}},
  \bauthor{\bsnm{Schmidler},~\bfnm{Scott~C.}\binits{S.~C.}} \AND
  \bauthor{\bsnm{Huber},~\bfnm{Mark}\binits{M.}}
(\byear{2009}a).
\btitle{Conditions for rapid mixing of parallel and simulated tempering on
  multimodal distributions}.
\bjournal{Ann. Appl. Probab.}
\bvolume{19}
\bpages{617--640}.
\bdoi{10.1214/08-AAP555}
\bmrnumber{2521882}
\end{barticle}
\endbibitem

\bibitem[\protect\citeauthoryear{Woodard, Schmidler and Huber}{2009b}]{torpid}
\begin{barticle}[author]
\bauthor{\bsnm{Woodard},~\bfnm{Dawn~B.}\binits{D.~B.}},
  \bauthor{\bsnm{Schmidler},~\bfnm{Scott~C.}\binits{S.~C.}} \AND
  \bauthor{\bsnm{Huber},~\bfnm{Mark}\binits{M.}}
(\byear{2009}b).
\btitle{Sufficient conditions for torpid mixing of parallel and simulated
  tempering}.
\bjournal{Electron. J. Probab.}
\bvolume{14}
\bpages{no. 29, 780--804}.
\bdoi{10.1214/EJP.v14-638}
\bmrnumber{2495560}
\end{barticle}
\endbibitem

\bibitem[\protect\citeauthoryear{Zanella and Roberts}{2021}]{zanella2021}
\begin{barticle}[author]
\bauthor{\bsnm{Zanella},~\bfnm{Giacomo}\binits{G.}} \AND
  \bauthor{\bsnm{Roberts},~\bfnm{Gareth}\binits{G.}}
(\byear{2021}).
\btitle{Multilevel linear models, {G}ibbs samplers and multigrid decompositions
  (with discussion)}.
\bjournal{Bayesian Anal.}
\bvolume{16}
\bpages{1308--1390}.
\bdoi{10.1214/20-BA1242}
\bmrnumber{4381136}
\end{barticle}
\endbibitem

\end{thebibliography}

\end{document}